\documentclass[submission,copyright,creativecommons]{eptcs}

\usepackage{underscore}  
\usepackage{amsmath,amssymb,graphics}

\usepackage{tikz}
\usetikzlibrary{positioning,arrows.meta,decorations.pathreplacing,calc,fit}

\typeout{TCILATEX Macros for Scientific Word 2.5 <22 Dec 95>.}
\typeout{NOTICE:  This macro file is NOT proprietary and may be 
freely copied and distributed.}
\makeatletter
%
\newcount\@hour\newcount\@minute\chardef\@x10\chardef\@xv60
\def\tcitime{
\def\@time{%
  \@minute\time\@hour\@minute\divide\@hour\@xv
  \ifnum\@hour<\@x 0\fi\the\@hour:%
  \multiply\@hour\@xv\advance\@minute-\@hour
  \ifnum\@minute<\@x 0\fi\the\@minute
  }}%

\@ifundefined{hyperref}{}{}

\@ifundefined{qExtProgCall}{\def\qExtProgCall#1#2#3#4#5#6{\relax}}{}
%
%
%
%
\def\QCTOpt[#1]#2{%
  \def\QCTOptB{#1}
  \def\QCTOptA{#2}
}
\def\QCTNOpt#1{%
  \def\QCTOptA{#1}
  \let\QCTOptB\empty
}
\def\Qct{%
  \@ifnextchar[{%
    \QCTOpt}{\QCTNOpt}
}
\def\QCBOpt[#1]#2{%
  \def\QCBOptB{#1}
  \def\QCBOptA{#2}
}
\def\QCBNOpt#1{%
  \def\QCBOptA{#1}
  \let\QCBOptB\empty
}
\def\Qcb{%
  \@ifnextchar[{%
    \QCBOpt}{\QCBNOpt}
}
\def\PrepCapArgs{%
  \ifx\QCBOptA\empty
    \ifx\QCTOptA\empty
      {}%
    \else
      \ifx\QCTOptB\empty
        {\QCTOptA}%
      \else
        [\QCTOptB]{\QCTOptA}%
      \fi
    \fi
  \else
    \ifx\QCBOptA\empty
      {}%
    \else
      \ifx\QCBOptB\empty
        {\QCBOptA}%
      \else
        [\QCBOptB]{\QCBOptA}%
      \fi
    \fi
  \fi
}
\newcount\GRAPHICSTYPE
\GRAPHICSTYPE=\z@
\def\GRAPHICSPS#1{%
 \ifcase\GRAPHICSTYPE
   \special{ps: #1}%
 \or
   \special{language "PS", include "#1"}%
 \fi
}%
%
%
%
\def\graffile#1#2#3#4{%
    \leavevmode
    \raise -#4 \BOXTHEFRAME{%
        \hbox to #2{\raise #3\hbox to #2{\null #1\hfil}}}%
}%
%
\def\draftbox#1#2#3#4{%
 \leavevmode\raise -#4 \hbox{%
  \frame{\rlap{\protect\tiny #1}\hbox to #2%
   {\vrule height#3 width\z@ depth\z@\hfil}%
  }%
 }%
}%
\newcount\draft
\draft=\z@

\newif\ifwasdraft
\wasdraftfalse

\def\GRAPHIC#1#2#3#4#5{%
 \ifnum\draft=\@ne\draftbox{#2}{#3}{#4}{#5}%
  \else\graffile{#1}{#3}{#4}{#5}%
  \fi
 }%
\def\addtoLaTeXparams#1{%
    \edef\LaTeXparams{\LaTeXparams #1}}%
%

\newif\ifBoxFrame \BoxFramefalse
\newif\ifOverFrame \OverFramefalse
\newif\ifUnderFrame \UnderFramefalse

\def\BOXTHEFRAME#1{%
   \hbox{%
      \ifBoxFrame
         \frame{#1}%
      \else
         {#1}%
      \fi
   }%
}

\def\doFRAMEparams#1{\BoxFramefalse\OverFramefalse\UnderFramefalse\readFRAMEparams#1\end}%
\def\readFRAMEparams#1{%
 \ifx#1\end%
  \let\next=\relax
  \else
  \ifx#1i\dispkind=\z@\fi
  \ifx#1d\dispkind=\@ne\fi
  \ifx#1f\dispkind=\tw@\fi
  \ifx#1t\addtoLaTeXparams{t}\fi
  \ifx#1b\addtoLaTeXparams{b}\fi
  \ifx#1p\addtoLaTeXparams{p}\fi
  \ifx#1h\addtoLaTeXparams{h}\fi
  \ifx#1X\BoxFrametrue\fi
  \ifx#1O\OverFrametrue\fi
  \ifx#1U\UnderFrametrue\fi
  \ifx#1w
    \ifnum\draft=1\wasdrafttrue\else\wasdraftfalse\fi
    \draft=\@ne
  \fi
  \let\next=\readFRAMEparams
  \fi
 \next
 }%
%

\def\IFRAME#1#2#3#4#5#6{%
      \bgroup
      \let\QCTOptA\empty
      \let\QCTOptB\empty
      \let\QCBOptA\empty
      \let\QCBOptB\empty
      #6%
      \parindent=0pt%
      \leftskip=0pt
      \rightskip=0pt
      \setbox0 = \hbox{\QCBOptA}%
      \@tempdima = #1\relax
      \ifOverFrame
          \typeout{This is not implemented yet}%
          \show\HELP
      \else
         \ifdim\wd0>\@tempdima
            \advance\@tempdima by \@tempdima
            \ifdim\wd0 >\@tempdima
               \textwidth=\@tempdima
               \setbox1 =\vbox{%
                  \noindent\hbox to \@tempdima{\hfill\GRAPHIC{#5}{#4}{#1}{#2}{#3}\hfill}\\%
                  \noindent\hbox to \@tempdima{\parbox[b]{\@tempdima}{\QCBOptA}}%
               }%
               \wd1=\@tempdima
            \else
               \textwidth=\wd0
               \setbox1 =\vbox{%
                 \noindent\hbox to \wd0{\hfill\GRAPHIC{#5}{#4}{#1}{#2}{#3}\hfill}\\%
                 \noindent\hbox{\QCBOptA}%
               }%
               \wd1=\wd0
            \fi
         \else
            \ifdim\wd0>0pt
              \hsize=\@tempdima
              \setbox1 =\vbox{%
                \unskip\GRAPHIC{#5}{#4}{#1}{#2}{0pt}%
                \break
                \unskip\hbox to \@tempdima{\hfill \QCBOptA\hfill}%
              }%
              \wd1=\@tempdima
           \else
              \hsize=\@tempdima
              \setbox1 =\vbox{%
                \unskip\GRAPHIC{#5}{#4}{#1}{#2}{0pt}%
              }%
              \wd1=\@tempdima
           \fi
         \fi
         \@tempdimb=\ht1
         \advance\@tempdimb by \dp1
         \advance\@tempdimb by -#2%
         \advance\@tempdimb by #3%
         \leavevmode
         \raise -\@tempdimb \hbox{\box1}%
      \fi
      \egroup%
}%
%
\def\DFRAME#1#2#3#4#5{%
 \begin{center}
     \let\QCTOptA\empty
     \let\QCTOptB\empty
     \let\QCBOptA\empty
     \let\QCBOptB\empty
     \ifOverFrame 
        #5\QCTOptA\par
     \fi
     \GRAPHIC{#4}{#3}{#1}{#2}{\z@}
     \ifUnderFrame 
        \nobreak\par #5\QCBOptA
     \fi
 \end{center}%
 }%
%
\def\FFRAME#1#2#3#4#5#6#7{%
 \begin{figure}[#1]%
  \let\QCTOptA\empty
  \let\QCTOptB\empty
  \let\QCBOptA\empty
  \let\QCBOptB\empty
  \ifOverFrame
    #4
    \ifx\QCTOptA\empty
    \else
      \ifx\QCTOptB\empty
        \caption{\QCTOptA}%
      \else
        \caption[\QCTOptB]{\QCTOptA}%
      \fi
    \fi
    \ifUnderFrame\else
      \label{#5}%
    \fi
  \else
    \UnderFrametrue%
  \fi
  \begin{center}\GRAPHIC{#7}{#6}{#2}{#3}{\z@}\end{center}%
  \ifUnderFrame
    #4
    \ifx\QCBOptA\empty
      \caption{}%
    \else
      \ifx\QCBOptB\empty
        \caption{\QCBOptA}%
      \else
        \caption[\QCBOptB]{\QCBOptA}%
      \fi
    \fi
    \label{#5}%
  \fi
  \end{figure}%
 }%
%
%
%
%
%
\newcount\dispkind%

\def\makeactives{
  \catcode`\"=\active
  \catcode`\;=\active
  \catcode`\:=\active
  \catcode`\'=\active
  \catcode`\~=\active
}
\bgroup
   \makeactives
   \gdef\activesoff{%
      \def"{\string"}
      \def;{\string;}
      \def:{\string:}
      \def'{\string'}
      \def~{\string~}
    }
\egroup

\def\FRAME#1#2#3#4#5#6#7#8{%
 \bgroup
 \@ifundefined{bbl@deactivate}{}{\activesoff}
 \ifnum\draft=\@ne
   \wasdrafttrue
 \else
   \wasdraftfalse%
 \fi
 \def\LaTeXparams{}%
 \dispkind=\z@
 \def\LaTeXparams{}%
 \doFRAMEparams{#1}%
 \ifnum\dispkind=\z@\IFRAME{#2}{#3}{#4}{#7}{#8}{#5}\else
  \ifnum\dispkind=\@ne\DFRAME{#2}{#3}{#7}{#8}{#5}\else
   \ifnum\dispkind=\tw@
    \edef\@tempa{\noexpand\FFRAME{\LaTeXparams}}%
    \@tempa{#2}{#3}{#5}{#6}{#7}{#8}%
    \fi
   \fi
  \fi
  \ifwasdraft\draft=1\else\draft=0\fi{}%
  \egroup
 }%
%

\def\TEXUX#1{"texux"}

%
%
%
%
%
%
%
%
%

%
\long\def\QQQ#1#2{%
     \long\expandafter\def\csname#1\endcsname{#2}}%
\@ifundefined{QTP}{\def\QTP#1{}}{}
\@ifundefined{QEXCLUDE}{\def\QEXCLUDE#1{}}{}
\@ifundefined{Qlb}{}{}
\@ifundefined{Qlt}{}{}
\long\def\QQA#1#2{}%
\def\QTR#1#2{{\csname#1\endcsname #2}}
\def\EXPAND#1[#2]#3{}%
\def\NOEXPAND#1[#2]#3{}%
\def\LaTeXparent#1{}%
\def\ChildStyles#1{}%
\def\ChildDefaults#1{}%
\def\QTagDef#1#2#3{}%
%
\@ifundefined{StyleEditBeginDoc}{}{}
%
\def\QQfnmark#1{\footnotemark}

%
\def\makeatletter\input gnuindex.sty\makeatother\makeindex{\makeatletter\input gnuindex.sty\makeatother\makeindex}%
\@ifundefined{INDEX}{\def\INDEX#1#2{}{}}{}%
\@ifundefined{SUBINDEX}{\def\SUBINDEX#1#2#3{}{}{}}{}%
\@ifundefined{initial}%
   {\def\initial#1{\bigbreak{\raggedright\large\bf #1}\kern 2\p@\penalty3000}}%
   {}%
\@ifundefined{entry}{}{}%
\@ifundefined{primary}{}{}%
\@ifundefined{secondary}{}{}%
\@ifundefined{ZZZ}{}{\makeatletter\input gnuindex.sty\makeatother\makeindex\makeatletter}%
%
\@ifundefined{abstract}{%
 \def\abstract{%
  \if@twocolumn
   \section*{Abstract (Not appropriate in this style!)}%
   \else \small 
   \begin{center}{\bf Abstract\vspace{-.5em}\vspace{\z@}}\end{center}%
   \quotation 
   \fi
  }%
 }{%
 }%
\@ifundefined{endabstract}{\def\endabstract
  {\if@twocolumn\else\endquotation\fi}}{}%
\@ifundefined{maketitle}{\def\maketitle#1{}}{}%
\@ifundefined{affiliation}{\def\affiliation#1{}}{}%
\@ifundefined{proof}{}{}%
\@ifundefined{endproof}{}{}%
\@ifundefined{newfield}{\def\newfield#1#2{}}{}%
\@ifundefined{chapter}{\def\chapter#1{\par(Chapter head:)#1\par }%
 \newcount\c@chapter}{}%
\@ifundefined{part}{\def\part#1{\par(Part head:)#1\par }}{}%
\@ifundefined{section}{\def\section#1{\par(Section head:)#1\par }}{}%
\@ifundefined{subsection}{\def\subsection#1%
 {\par(Subsection head:)#1\par }}{}%
\@ifundefined{subsubsection}{\def\subsubsection#1%
 {\par(Subsubsection head:)#1\par }}{}%
\@ifundefined{paragraph}{\def\paragraph#1%
 {\par(Subsubsubsection head:)#1\par }}{}%
\@ifundefined{subparagraph}{\def\subparagraph#1%
 {\par(Subsubsubsubsection head:)#1\par }}{}%
\@ifundefined{therefore}{}{}%
\@ifundefined{backepsilon}{}{}%
\@ifundefined{yen}{}{}%
\@ifundefined{registered}{%
   \def\registered{\relax\ifmmode{}\r@gistered
                    \else$\m@th\r@gistered$\fi}%
 \def\r@gistered{^{\ooalign
  {\hfil\raise.07ex\hbox{$\scriptstyle\rm\text{R}$}\hfil\crcr
  \mathhexbox20D}}}}{}%
\@ifundefined{Eth}{}{}%
\@ifundefined{eth}{}{}%
\@ifundefined{Thorn}{}{}%
\@ifundefined{thorn}{}{}%
%
\@ifundefined{degree}{}{}%
%
\newdimen\theight
\def\Column{%
 \vadjust{\setbox\z@=\hbox{\scriptsize\quad\quad tcol}%
  \theight=\ht\z@\advance\theight by \dp\z@\advance\theight by \lineskip
  \kern -\theight \vbox to \theight{%
   \rightline{\rlap{\box\z@}}%
   \vss
   }%
  }%
 }%
\def\qed{%
 \ifhmode\unskip\nobreak\fi\ifmmode\ifinner\else\hskip5\p@\fi\fi
 \hbox{\hskip5\p@\vrule width4\p@ height6\p@ depth1.5\p@\hskip\p@}%
 }%
\def\miss{\hbox{\vrule height2\p@ width 2\p@ depth\z@}}%
%
%
\def\tcol#1{{\baselineskip=6\p@ \vcenter{#1}} \Column}  %
%
%
%
%
%

\def\newfmtname{LaTeX2e}
\def\chkcompat{%
   \if@compatibility
   \else
     \usepackage{latexsym}
   \fi
}

\ifx\fmtname\newfmtname
  \DeclareOldFontCommand{\rm}{\normalfont\rmfamily}{\mathrm}
  \DeclareOldFontCommand{\sf}{\normalfont\sffamily}{\mathsf}
  \DeclareOldFontCommand{\tt}{\normalfont\ttfamily}{\mathtt}
  \DeclareOldFontCommand{\bf}{\normalfont\bfseries}{\mathbf}
  \DeclareOldFontCommand{\it}{\normalfont\itshape}{\mathit}
  \DeclareOldFontCommand{\sl}{\normalfont\slshape}{\@nomath\sl}
  \DeclareOldFontCommand{\sc}{\normalfont\scshape}{\@nomath\sc}
  \chkcompat
\fi

%

\def\alpha{\Greekmath 010B }%
\def\beta{\Greekmath 010C }%
\def\gamma{\Greekmath 010D }%
\def\delta{\Greekmath 010E }%
\def\epsilon{\Greekmath 010F }%
\def\zeta{\Greekmath 0110 }%
\def\eta{\Greekmath 0111 }%
\def\theta{\Greekmath 0112 }%
\def\iota{\Greekmath 0113 }%
\def\kappa{\Greekmath 0114 }%
\def\lambda{\Greekmath 0115 }%
\def\mu{\Greekmath 0116 }%
\def\nu{\Greekmath 0117 }%
\def\xi{\Greekmath 0118 }%
\def\pi{\Greekmath 0119 }%
\def\rho{\Greekmath 011A }%
\def\sigma{\Greekmath 011B }%
\def\tau{\Greekmath 011C }%
\def\upsilon{\Greekmath 011D }%
\def\phi{\Greekmath 011E }%
\def\chi{\Greekmath 011F }%
\def\psi{\Greekmath 0120 }%
\def\omega{\Greekmath 0121 }%
\def\varepsilon{\Greekmath 0122 }%
\def\vartheta{\Greekmath 0123 }%
\def\varpi{\Greekmath 0124 }%
\def\varrho{\Greekmath 0125 }%
\def\varsigma{\Greekmath 0126 }%
\def\varphi{\Greekmath 0127 }%

\def\nabla{\Greekmath 0272 }
\def\FindBoldGroup{%
   {\setbox0=\hbox{$\mathbf{x\global\edef\theboldgroup{\the\mathgroup}}$}}%
}

\def\Greekmath#1#2#3#4{%
    \if@compatibility
        \ifnum\mathgroup=\symbold
           \mathchoice{\mbox{\boldmath$\displaystyle\mathchar"#1#2#3#4$}}%
                      {\mbox{\boldmath$\textstyle\mathchar"#1#2#3#4$}}%
                      {\mbox{\boldmath$\scriptstyle\mathchar"#1#2#3#4$}}%
                      {\mbox{\boldmath$\scriptscriptstyle\mathchar"#1#2#3#4$}}%
        \else
           \mathchar"#1#2#3#4%
        \fi 
    \else 
        \FindBoldGroup
        \ifnum\mathgroup=\theboldgroup 
           \mathchoice{\mbox{\boldmath$\displaystyle\mathchar"#1#2#3#4$}}%
                      {\mbox{\boldmath$\textstyle\mathchar"#1#2#3#4$}}%
                      {\mbox{\boldmath$\scriptstyle\mathchar"#1#2#3#4$}}%
                      {\mbox{\boldmath$\scriptscriptstyle\mathchar"#1#2#3#4$}}%
        \else
           \mathchar"#1#2#3#4%
        \fi     	    
	  \fi}

\newif\ifGreekBold  \GreekBoldfalse
\let\SAVEPBF=\pbf
\def\pbf{\GreekBoldtrue\SAVEPBF}%

\@ifundefined{theorem}{\newtheorem{theorem}{Theorem}}{}
\@ifundefined{lemma}{}{}
\@ifundefined{corollary}{\newtheorem{corollary}[theorem]{Corollary}}{}
\@ifundefined{conjecture}{}{}
\@ifundefined{proposition}{}{}
\@ifundefined{axiom}{}{}
\@ifundefined{remark}{}{}
\@ifundefined{example}{}{}
\@ifundefined{exercise}{}{}
\@ifundefined{definition}{\newtheorem{definition}{Definition}}{}

\@ifundefined{mathletters}{%
  \newcounter{equationnumber}  
  \def\mathletters{%
     \addtocounter{equation}{1}
     \edef\@currentlabel{\theequation}%
     \setcounter{equationnumber}{\c@equation}
     \setcounter{equation}{0}%
     \edef\theequation{\@currentlabel\noexpand\alph{equation}}%
  }
  
}{}

\@ifundefined{BibTeX}{%
    \def\BibTeX{{\rm B\kern-.05em{\sc i\kern-.025em b}\kern-.08em
                 T\kern-.1667em\lower.7ex\hbox{E}\kern-.125emX}}}{}%
\@ifundefined{AmS}%
    {\def\AmS{{\protect\usefont{OMS}{cmsy}{m}{n}%
                A\kern-.1667em\lower.5ex\hbox{M}\kern-.125emS}}}{}%
\@ifundefined{AmSTeX}{}{}%
%

%
%
\ifx\ds@amstex\relax
   \message{amstex already loaded}\makeatother 
\else
   \@ifpackageloaded{amstex}%
      {\message{amstex already loaded}\makeatother }
      {}
   \@ifpackageloaded{amsgen}%
      {\message{amsgen already loaded}\makeatother }
      {}
\fi
%
%
%
%
\let\DOTSI\relax
\def\RIfM@{\relax\ifmmode}%
\def\FN@{\futurelet\next}%
\newcount\intno@
\def\iint{\DOTSI\intno@\tw@\FN@\ints@}%
\def\iiint{\DOTSI\intno@\thr@@\FN@\ints@}%
\def\iiiint{\DOTSI\intno@4 \FN@\ints@}%
\def\idotsint{\DOTSI\intno@\z@\FN@\ints@}%
\def\ints@{\findlimits@\ints@@}%
\newif\iflimtoken@
\newif\iflimits@
\def\findlimits@{\limtoken@true\ifx\next\limits\limits@true
 \else\ifx\next\nolimits\limits@false\else
 \limtoken@false\ifx\ilimits@\nolimits\limits@false\else
 \ifinner\limits@false\else\limits@true\fi\fi\fi\fi}%
\def\multint@{\int\ifnum\intno@=\z@\intdots@                          
 \else\intkern@\fi                                                    
 \ifnum\intno@>\tw@\int\intkern@\fi                                   
 \ifnum\intno@>\thr@@\int\intkern@\fi                                 
 \int}
\def\multintlimits@{\intop\ifnum\intno@=\z@\intdots@\else\intkern@\fi
 \ifnum\intno@>\tw@\intop\intkern@\fi
 \ifnum\intno@>\thr@@\intop\intkern@\fi\intop}%
\def\intic@{%
    \mathchoice{\hskip.5em}{\hskip.4em}{\hskip.4em}{\hskip.4em}}%
\def\negintic@{\mathchoice
 {\hskip-.5em}{\hskip-.4em}{\hskip-.4em}{\hskip-.4em}}%
\def\ints@@{\iflimtoken@                                              
 \def\ints@@@{\iflimits@\negintic@
   \mathop{\intic@\multintlimits@}\limits                             
  \else\multint@\nolimits\fi                                          
  \eat@}
 \else                                                                
 \def\ints@@@{\iflimits@\negintic@
  \mathop{\intic@\multintlimits@}\limits\else
  \multint@\nolimits\fi}\fi\ints@@@}%
\def\intkern@{\mathchoice{\!\!\!}{\!\!}{\!\!}{\!\!}}%
\def\plaincdots@{\mathinner{\cdotp\cdotp\cdotp}}%
\def\intdots@{\mathchoice{\plaincdots@}%
 {{\cdotp}\mkern1.5mu{\cdotp}\mkern1.5mu{\cdotp}}%
 {{\cdotp}\mkern1mu{\cdotp}\mkern1mu{\cdotp}}%
 {{\cdotp}\mkern1mu{\cdotp}\mkern1mu{\cdotp}}}%
%
%
%
\def\RIfM@{\relax\protect\ifmmode}
\def\text{\RIfM@\expandafter\text@\else\expandafter\mbox\fi}
\let\nfss@text\text
\def\text@#1{\mathchoice
   {\textdef@\displaystyle\f@size{#1}}%
   {\textdef@\textstyle\tf@size{\firstchoice@false #1}}%
   {\textdef@\textstyle\sf@size{\firstchoice@false #1}}%
   {\textdef@\textstyle \ssf@size{\firstchoice@false #1}}%
   \glb@settings}

\def\textdef@#1#2#3{\hbox{{%
                    \everymath{#1}%
                    \let\f@size#2\selectfont
                    #3}}}
\newif\iffirstchoice@
\firstchoice@true
%
%
%
%
%
\def\Let@{\relax\iffalse{\fi\let\\=\cr\iffalse}\fi}%
\def\vspace@{\def\vspace##1{\crcr\noalign{\vskip##1\relax}}}%
\def\multilimits@{\bgroup\vspace@\Let@
 \baselineskip\fontdimen10 \scriptfont\tw@
 \advance\baselineskip\fontdimen12 \scriptfont\tw@
 \lineskip\thr@@\fontdimen8 \scriptfont\thr@@
 \lineskiplimit\lineskip
 \vbox\bgroup\ialign\bgroup\hfil$\m@th\scriptstyle{##}$\hfil\crcr}%
\def\Sb{_\multilimits@}%
\def\endSb{\crcr\egroup\egroup\egroup}%
\def\Sp{^\multilimits@}%

%
%
%
\newdimen\ex@
\ex@.2326ex
\def\rightarrowfill@#1{$#1\m@th\mathord-\mkern-6mu\cleaders
 \hbox{$#1\mkern-2mu\mathord-\mkern-2mu$}\hfill
 \mkern-6mu\mathord\rightarrow$}%
\def\leftarrowfill@#1{$#1\m@th\mathord\leftarrow\mkern-6mu\cleaders
 \hbox{$#1\mkern-2mu\mathord-\mkern-2mu$}\hfill\mkern-6mu\mathord-$}%
\def\leftrightarrowfill@#1{$#1\m@th\mathord\leftarrow
\mkern-6mu\cleaders
 \hbox{$#1\mkern-2mu\mathord-\mkern-2mu$}\hfill
 \mkern-6mu\mathord\rightarrow$}%
\def\overrightarrow{\mathpalette\overrightarrow@}%
\def\overrightarrow@#1#2{\vbox{\ialign{##\crcr\rightarrowfill@#1\crcr
 \noalign{\kern-\ex@\nointerlineskip}$\m@th\hfil#1#2\hfil$\crcr}}}%

\def\overleftarrow{\mathpalette\overleftarrow@}%
\def\overleftarrow@#1#2{\vbox{\ialign{##\crcr\leftarrowfill@#1\crcr
 \noalign{\kern-\ex@\nointerlineskip}$\m@th\hfil#1#2\hfil$\crcr}}}%
\def\overleftrightarrow{\mathpalette\overleftrightarrow@}%
\def\overleftrightarrow@#1#2{\vbox{\ialign{##\crcr
   \leftrightarrowfill@#1\crcr
 \noalign{\kern-\ex@\nointerlineskip}$\m@th\hfil#1#2\hfil$\crcr}}}%
\def\underrightarrow{\mathpalette\underrightarrow@}%
\def\underrightarrow@#1#2{\vtop{\ialign{##\crcr$\m@th\hfil#1#2\hfil
  $\crcr\noalign{\nointerlineskip}\rightarrowfill@#1\crcr}}}%

\def\underleftarrow{\mathpalette\underleftarrow@}%
\def\underleftarrow@#1#2{\vtop{\ialign{##\crcr$\m@th\hfil#1#2\hfil
  $\crcr\noalign{\nointerlineskip}\leftarrowfill@#1\crcr}}}%
\def\underleftrightarrow{\mathpalette\underleftrightarrow@}%
\def\underleftrightarrow@#1#2{\vtop{\ialign{##\crcr$\m@th
  \hfil#1#2\hfil$\crcr
 \noalign{\nointerlineskip}\leftrightarrowfill@#1\crcr}}}%


\def\qopnamewl@#1{\mathop{\operator@font#1}\nlimits@}
\let\nlimits@\displaylimits
\def\setboxz@h{\setbox\z@\hbox}

\def\varlim@#1#2{\mathop{\vtop{\ialign{##\crcr
 \hfil$#1\m@th\operator@font lim$\hfil\crcr
 \noalign{\nointerlineskip}#2#1\crcr
 \noalign{\nointerlineskip\kern-\ex@}\crcr}}}}

 \def\rightarrowfill@#1{\m@th\setboxz@h{$#1-$}\ht\z@\z@
  $#1\copy\z@\mkern-6mu\cleaders
  \hbox{$#1\mkern-2mu\box\z@\mkern-2mu$}\hfill
  \mkern-6mu\mathord\rightarrow$}
\def\leftarrowfill@#1{\m@th\setboxz@h{$#1-$}\ht\z@\z@
  $#1\mathord\leftarrow\mkern-6mu\cleaders
  \hbox{$#1\mkern-2mu\copy\z@\mkern-2mu$}\hfill
  \mkern-6mu\box\z@$}

\def\projlim{\qopnamewl@{proj\,lim}}
\def\injlim{\qopnamewl@{inj\,lim}}
\def\varinjlim{\mathpalette\varlim@\rightarrowfill@}
\def\varprojlim{\mathpalette\varlim@\leftarrowfill@}
\def\varliminf{\mathpalette\varliminf@{}}
\def\varliminf@#1{\mathop{\underline{\vrule\@depth.2\ex@\@width\z@
   \hbox{$#1\m@th\operator@font lim$}}}}
\def\varlimsup{\mathpalette\varlimsup@{}}
\def\varlimsup@#1{\mathop{\overline
  {\hbox{$#1\m@th\operator@font lim$}}}}

%
%
%
%
%
%
%
%
%
%
%
%
%
%
%
%
%
%
%
%
%
%
%

%
%
%
%
%
%
%
%
%
%
%
%
%
%
%
%
%
%
%
%
%
%

%
%
%
%
%
%
%
%
%
%
%
%
%
%
%
%
%
%
%
%
%
%
%
%
\begingroup \catcode `|=0 \catcode `[= 1
\catcode`]=2 \catcode `\{=12 \catcode `\}=12
\catcode`\\=12 
|gdef|@alignverbatim#1\end{align}[#1|end[align]]
|gdef|@salignverbatim#1\end{align*}[#1|end[align*]]

|gdef|@alignatverbatim#1\end{alignat}[#1|end[alignat]]
|gdef|@salignatverbatim#1\end{alignat*}[#1|end[alignat*]]

|gdef|@xalignatverbatim#1\end{xalignat}[#1|end[xalignat]]
|gdef|@sxalignatverbatim#1\end{xalignat*}[#1|end[xalignat*]]

|gdef|@gatherverbatim#1\end{gather}[#1|end[gather]]
|gdef|@sgatherverbatim#1\end{gather*}[#1|end[gather*]]

|gdef|@gatherverbatim#1\end{gather}[#1|end[gather]]
|gdef|@sgatherverbatim#1\end{gather*}[#1|end[gather*]]

|gdef|@multilineverbatim#1\end{multiline}[#1|end[multiline]]
|gdef|@smultilineverbatim#1\end{multiline*}[#1|end[multiline*]]

|gdef|@arraxverbatim#1\end{arrax}[#1|end[arrax]]
|gdef|@sarraxverbatim#1\end{arrax*}[#1|end[arrax*]]

|gdef|@tabulaxverbatim#1\end{tabulax}[#1|end[tabulax]]
|gdef|@stabulaxverbatim#1\end{tabulax*}[#1|end[tabulax*]]

|endgroup

\def\align{\@verbatim \frenchspacing\@vobeyspaces \@alignverbatim
You are using the "align" environment in a style in which it is not defined.}

\@namedef{align*}{\@verbatim\@salignverbatim
You are using the "align*" environment in a style in which it is not defined.}
\expandafter\let\csname endalign*\endcsname =\endtrivlist

\def\alignat{\@verbatim \frenchspacing\@vobeyspaces \@alignatverbatim
You are using the "alignat" environment in a style in which it is not defined.}

\@namedef{alignat*}{\@verbatim\@salignatverbatim
You are using the "alignat*" environment in a style in which it is not defined.}
\expandafter\let\csname endalignat*\endcsname =\endtrivlist

\def\xalignat{\@verbatim \frenchspacing\@vobeyspaces \@xalignatverbatim
You are using the "xalignat" environment in a style in which it is not defined.}

\@namedef{xalignat*}{\@verbatim\@sxalignatverbatim
You are using the "xalignat*" environment in a style in which it is not defined.}
\expandafter\let\csname endxalignat*\endcsname =\endtrivlist

\def\gather{\@verbatim \frenchspacing\@vobeyspaces \@gatherverbatim
You are using the "gather" environment in a style in which it is not defined.}

\@namedef{gather*}{\@verbatim\@sgatherverbatim
You are using the "gather*" environment in a style in which it is not defined.}
\expandafter\let\csname endgather*\endcsname =\endtrivlist

\def\multiline{\@verbatim \frenchspacing\@vobeyspaces \@multilineverbatim
You are using the "multiline" environment in a style in which it is not defined.}

\@namedef{multiline*}{\@verbatim\@smultilineverbatim
You are using the "multiline*" environment in a style in which it is not defined.}
\expandafter\let\csname endmultiline*\endcsname =\endtrivlist

\def\arrax{\@verbatim \frenchspacing\@vobeyspaces \@arraxverbatim
You are using a type of "array" construct that is only allowed in AmS-LaTeX.}

\def\tabulax{\@verbatim \frenchspacing\@vobeyspaces \@tabulaxverbatim
You are using a type of "tabular" construct that is only allowed in AmS-LaTeX.}

\@namedef{arrax*}{\@verbatim\@sarraxverbatim
You are using a type of "array*" construct that is only allowed in AmS-LaTeX.}
\expandafter\let\csname endarrax*\endcsname =\endtrivlist

\@namedef{tabulax*}{\@verbatim\@stabulaxverbatim
You are using a type of "tabular*" construct that is only allowed in AmS-LaTeX.}
\expandafter\let\csname endtabulax*\endcsname =\endtrivlist


\def\@@eqncr{\let\@tempa\relax
    \ifcase\@eqcnt \def\@tempa{& & &}\or \def\@tempa{& &}%
      \else \def\@tempa{&}\fi
     \@tempa
     \if@eqnsw
        \iftag@
           \@taggnum
        \else
           \@eqnnum\stepcounter{equation}%
        \fi
     \fi
     \global\tag@false
     \global\@eqnswtrue
     \global\@eqcnt\z@\cr}

 \def\endequation{%
     \ifmmode\ifinner 
      \iftag@
        \addtocounter{equation}{-1} 
        $\hfil
           \displaywidth\linewidth\@taggnum\egroup \endtrivlist
        \global\tag@false
        \global\@ignoretrue   
      \else
        $\hfil
           \displaywidth\linewidth\@eqnnum\egroup \endtrivlist
        \global\tag@false
        \global\@ignoretrue 
      \fi
     \else   
      \iftag@
        \addtocounter{equation}{-1} 
        \eqno \hbox{\@taggnum}
        \global\tag@false%
        $$\global\@ignoretrue
      \else
        \eqno \hbox{\@eqnnum}
        $$\global\@ignoretrue
      \fi
     \fi\fi
 } 

 \newif\iftag@ \tag@false
 
 \def\tag{\@ifnextchar*{\@tagstar}{\@tag}}
 \def\@tag#1{%
     \global\tag@true
     \global\def\@taggnum{(#1)}}
 \def\@tagstar*#1{%
     \global\tag@true
     \global\def\@taggnum{#1}%
}


\makeatother

\title{(Tissue) P Systems with Vesicles of Multisets}

\author{
Artiom Alhazov
\institute{
Institute of Mathematics and Computer Science\\
Academy of Sciences of Moldova\\
Academiei 5, Chi\c sin\u au, MD-2028, Moldova\\
\email{artiom@math.md}\\[0.4em]
}
\and 
Rudolf Freund
\institute{
Faculty of Informatics, TU Wien\\
Favoritenstra\ss{}e 9--11, 1040 Vienna, Austria\\
\email{rudi@emcc.at}\\[0.4em]
}
\and 
Sergiu Ivanov
\institute{
Laboratoire d'Algorithmique, Complexit\'e et Logique,\\
Universit\'{e} Paris Est -- Cr\'{e}teil Val de Marne\\
61, av. G\'{e}n\'{e}ral de Gaulle, 94010, Cr\'{e}teil, France\\
\email{sergiu.ivanov@u-pec.fr}\\[0.2em]
\and
TIMC-IMAG/DyCTiM, Faculty of Medicine of Grenoble,\\
5 avenue du Grand Sablon, 38700, La Tronche, France\\
\email{sergiu.ivanov@univ-grenoble-alpes.fr}\\[0.4em] 
}
\and
Sergey Verlan
\institute{
Laboratoire d'Algorithmique, Complexit\'e et Logique,\\
Universit\'e Paris Est Cr\'eteil,\\
61 av. du g\'en\'eral de Gaulle,  94010 Cr\'eteil, France\\
\email{verlan@u-pec.fr}
}
}


\begin{document}
\maketitle

\begin{abstract}
We consider tissue P systems working on vesicles of multisets with the very 
simple operations of insertion, deletion, and substitution of single objects. 
With the whole multiset being enclosed in a vesicle, sending it to a target cell 
can be indicated in those simple rules working on the multiset. As derivation 
modes we consider the sequential mode, where exactly one rule is applied 
in a derivation step, and the set maximal mode, where in each derivation step 
a non-extendable set of rules is applied. With the set maximal mode, 
computational completeness can already be obtained with tissue P systems 
having a tree structure, whereas tissue P systems even with an arbitrary 
communication structure are not computationally complete when working in 
the sequential mode. Adding polarizations -- -1, 0, 1 are sufficient -- allows 
for obtaining computational completeness even for tissue P systems working 
in the sequential mode. 
\end{abstract}


\section{Introduction}

Membrane systems were introduced at the end of last century by Gheorghe 
P\u{a}un, e.g., see \cite{DassowPaun1999} and \cite{Paun2000}, motivated 
by the biological interaction of molecules between cells and their surrounding 
environment. In the basic model, the membranes are organized in a 
hierarchical membrane structure (i.e., the connection structure between the 
compartments/regions within the membranes being representable as a tree),
and the multisets of objects in the membrane regions evolve in a maximally 
parallel way, with the resulting objects also being able to pass through the 
surrounding membrane to the parent membrane region or to enter an inner 
membrane. Since then, a lot of variants of membrane systems, for obvious
reasons mostly called \emph{P systems}, most of them being computationally 
complete, i.e., being able to simulate the computations of register machines. 
If an arbitrary graph is used as the connection structure between the 
cells/membranes, the systems are called \emph{tissue P systems}, see \cite{MartinVideetal2002}. 

Instead of multisets of plain symbols coming from a finite alphabet, 
P systems quite often operate on more complex objects (e.g., strings, 
arrays), too.  A comprehensive overview of different flavors of (tissue) P 
systems and their expressive power is given in the handbook which appeared 
in 2010, see~\cite{Paunetal2010}. For a state of the art snapshot of the domain, 
we refer the reader to the P systems website~\cite{Ppage}, as well as to the 
Bulletin series of the International Membrane Computing Society~\cite{imcs}.
\medskip

Very simple biologically motivated operations on strings are the so-called 
\emph{point mutations}, i.e., \emph{insertion}, \emph{deletion}, and 
\emph{substitution}, which mean inserting or deleting one symbol in a string 
or replacing one symbol by another one. For example, graph-controlled 
insertion-deletion systems have been investigated in~\cite{Freundetal2010},
and P systems using these operations at the left or right end of string objects 
were introduced in~\cite{Freundetal2014}, where also a short history of 
using these point mutations in formal language theory can be found.

When dealing with multisets of objects, the close relation of insertion and 
deletion with the increment and decrement instructions in register machines 
looks rather obvious. The power of changing states in connection with the 
increment and decrement instructions then has to be mimicked by moving 
the whole multiset representing the configuration of a register machine from
one cell to another one in the corresponding tissue system. Yet usually moving 
the whole multiset of objects in a cell to another one, besides maximal parallelism, 
requires \emph{target agreement} between all applied rules, i.e., that all results
are moved to the same target cell, e.g., see~\cite{FreundPaun2013}.
\medskip

In this paper we choose a different approach to guarantee that the 
whole multiset is moved even if only some point mutations are applied -- 
the multiset is enclosed in a vesicle, and this vesicle is moved from one cell 
to another one as a whole, no matter how many rules have been applied.
One constraint, of course, is that a common target has been selected by 
all rules to be applied; in the sequential derivation mode, this is no 
restriction at all, whereas in the set maximal derivation mode this means 
that the multiset of rules to be applied must be non-extendable, but all rules
must indicate the same target cell. As we will show, with the set maximal 
derivation mode computational completeness can be obtained, whereas with 
the sequential mode we achieve a characterization of the family of sets of 
(vectors of) natural numbers defined by partially blind register machines, 
which itself corresponds with the family of sets of (vectors of) natural numbers 
obtained as number (Parikh) sets of string languages generated by matrix 
grammars without appearance checking.
\smallskip

The idea of using vesicles of multisets has already been used in variants 
of P systems using the operations drip and mate, corresponding with the 
operations cut and paste well-known from the area of DNA computing, 
see~\cite{FreundOswald2007}. Yet in that case, always two vesicles (one 
of them possibly an axiom available in an unbounded number) have to interact. 
In this paper, the rules (bounded in number) are always applied to the same 
vesicle.

\medskip

The \emph{point mutations}, i.e., \emph{insertion}, \emph{deletion}, and 
\emph{substitution}, well-known from biology as operations on DNA, have 
also widely been used in the variants of \emph{networks of evolutionary 
processors (NEPs)}, which consist of cells (processors) each of them allowing 
for specific operations on strings. \emph{Networks of Evolutionary Processors} 
(NEPs) were introduced in~\cite{Castellanosetal2003} as a model of string 
processing devices distributed over a graph, with the processors carrying out 
these point mutations. Computations in such a network consist of alternatingly 
performing two steps -- an \emph{evolution step} where in each cell all possible 
operations on all strings currently present in the cell are performed, and a 
\emph{communication step} in which strings are sent from one cell to another 
cell provided specific conditions are fulfilled. Examples of such conditions are 
(output and input) filters which have to be passed, and these (output and input) 
filters can be specific types of regular languages or permitting and forbidden 
context conditions. The set of strings obtained as results of computations by the 
NEP is defined as the set of objects which appear in some distinguished node in 
the course of a computation. 

In \emph{hybrid networks of evolutionary processors} (HNEPs), each language 
processor performs only one of these operations at a certain position of the strings. 
Furthermore, the filters are defined by some variants of random-context conditions, 
i.e., they check the presence and the absence of certain symbols in the strings. 
For an overview on HNEPs and the best results known so far, we refer the reader 
to~\cite{Alhazovetal2016}.

In \emph{networks of evolutionary processors with polarizations}, each symbol 
has assigned a fixed integer value; the polarization of a string is computed 
according to a given evaluation function, and in the communication step the 
obtained string is moved to any of the connected cells having the same polarization.
Networks of polarized evolutionary processors were considered 
in~\cite{Arroyoetal2014} and~\cite{Arroyoetal2017}), and networks of evolutionary
processors only using the elementary polarizations $-1,0,1$ were investigated in \cite{Popescu2016}. The number of processors (cells) needed to obtain 
computational completeness has been improved in a considerable way  
in~\cite{Freundetal2017} making these results already comparable 
with those obtained in~\cite{Alhazovetal2016} for hybrid networks of evolutionary 
processors using permitting and forbidden contexts as filters for the communication 
of strings between cells. 

Seen from a biological point of view, networks of evolutionary processors are a 
collection of cells communicating via membrane channels, which makes them  
closely related to tissue-like P systems considered in the area of membrane computing.
Hence, in this paper we will also take over the idea of polarizations; as 
in~\cite{Popescu2016} and in~\cite{Freundetal2017}, we will only consider the 
elementary polarizations $-1,0,1$ for the symbols as well as for the cells.
Using this variant of tissue P systems, we are going to show computational 
completeness even with the sequential derivation mode.

The rest of the paper is structured as follows: In Section~\ref{Prerequisites} 
we recall some well-known definitions from formal language theory, and in 
the succeeding Section~\ref{tPsystems} we give the definitions of the model of 
tissue P systems with vesicles of multisets as well as its variants to be considered 
in this paper, especially the variant with elementary polarizations $-1,0,1$. 
In Section~\ref{Subst} we show our main results for tissue P systems with 
vesicles of multisets using all three operations insertion, deletion, and substitution, 
but without using polarizations, i.e., that computational completeness can be 
achieved by using the set maximally parallel derivation mode, whereas with the 
sequential mode we get a characterization of the families of sets of natural 
numbers and Parikh sets of natural numbers generated by partially blind 
register machines. In Section~\ref{Pol} we show that even with the sequential 
derivation mode we obtain computational completeness when using polarizations 
(only $-1,0,1$ are needed). A summary of the results and an outlook to future 
research conclude the paper.


\section{Prerequisites}\label{Prerequisites}

We start by recalling some basic notions of formal language theory. An
alphabet is a non-empty finite set. A finite sequence of symbols from an
alphabet $V$ is called a \emph{string} over $V$. The set of all strings over 
$V$ is denoted by $V^{\ast }$; the \emph{empty string} is denoted by $%
\lambda $; moreover, we define $V^{+}=V^{\ast }\setminus \{\lambda \}$. The 
\emph{length} of a string $x$ is denoted by $|x|$, and by $|x|_{a}$ we
denote the number of occurrences of a letter $a$ in a string $x$. For a
string $x$, $\emph{alph(x)}$ denotes the smallest alphabet $\Sigma $ such
that $x\in \Sigma ^{\ast }$. 

A \emph{multiset} $M$ with underlying set $A$ is a pair $(A,f)$ where 
$f:\,A\to \mathbb{N}$ is a mapping, with $\mathbb{N}$ denoting the set of 
natural numbers (non-negative integers). If $M=(A,f)$ is a multiset then its 
\emph{support} is defined as $supp(M)=\{x\in A \,|\,f(x)> 0\}$. A multiset is 
empty (respectively finite) if its support is the empty set (respectively a 
finite set). If $M=(A,f)$ is a finite multiset over $A$ and 
$supp(M)=\{ a_1,\ldots,a_k\}$, then it can also be represented by the string 
$a_1^{f(a_1)} \dots a_k^{f(a_k)}$ over the alphabet $\{ a_1,\ldots,a_k\}$
(the corresponding vector ${f(a_1)}, \dots a_,{f(a_k)}$ of natural numbers is 
called Parikh vector of the string $a_1^{f(a_1)} \dots a_k^{f(a_k)}$), and, 
moreover, all permutations of this string precisely identify the same multiset 
$M$ (they have the same Parikh vector). The set of all multisets over the 
alphabet $V$ is denoted by $V^{\circ}$.

The family of all recursively enumerable sets of strings is denoted by 
$RE$, the corresponding family of recursively enumerable sets of 
Parikh sets (vectors of natural numbers) is denoted by $PsRE$.
For more details of formal language theory the reader is referred to the 
monographs and handbooks in this area, such as 
\cite{RozenbergSalomaa1997}.


\subsection{Insertion, deletion, and substitution}
\label{PointMutations}

For an alphabet $V$, let $a\rightarrow b$ be a rewriting rule with $a,b\in
V\cup \{\lambda \}$, and $ab\neq \lambda $; we call such a rule a 
\emph{substitution rule} if both $a$ and $b$ are different from $\lambda $; 
such a rule is called a \emph{deletion rule} if $ a\neq \lambda $ and 
$b=\lambda $, and it is called an \emph{insertion rule} if $a=\lambda $ 
and $b\neq \lambda $. The set of all insertion rules, deletion rules, and 
substitution rules over an alphabet $V$ is denoted by $Ins_{V},Del_{V}$, 
and $Sub_{V}$, respectively. Whereas an insertion rule is always applicable, 
the applicability of a deletion and a substitution rules depends on the 
presence of the symbol $a$. We remark that insertion rules, deletion rules, 
and substitution rules can be applied to strings as well as to multisets, too.
Whereas in the string case, the position of the inserted, deleted, and 
substituted symbol matters, in the case of a multiset this only means 
incrementing the number of symbols $b$, decrementing the number of 
symbols $a$, or decrementing the number of symbols $a$ and at the 
same time incrementing the number of symbols $b$.


\subsection{Register machines}
\label{SubsectionRM}

Register machines are well-known universal devices for computing
(generating or accepting) sets of vectors of natural numbers.

\begin{definition}

A \emph{register machine} is a construct
\[
M=\left( m,B,l_{0},l_{h},P\right)
\]
where
\begin{itemize}
\item $m$ is the number of registers,

\item $B$ is a set of labels bijectively labeling the instructions in the set $P$,

\item $l_{0}\in B$ is the initial label, and

\item $l_{h}\in B$ is the final label.
\end{itemize}

The labeled instructions of $M$ in $P$ can be of the following forms:

\begin{itemize}
\item $p:\left( ADD\left( r\right) ,q,s\right) $, with $p\in
B\setminus \left\{ l_{h}\right\} $, $q,s\in B$, $1\leq r\leq m$.%
\newline
Increase the value of register $r$ by one, and non-deterministically jump to
instruction $q$ or $s$.

\item $p:\left( SUB\left( r\right) ,q,s\right) $, with $p\in
B\setminus \left\{ l_{h}\right\} $, $q,s\in B$, $1\leq r\leq m$.%
\newline
If the value of register $r$ is not zero then decrease the value of 
register~$r$ by one (\emph{decrement} case) and jump to instruction 
$q$, otherwise jump to instruction~$s$
(\emph{zero-test} case).

\item $l_{h}:HALT$.\newline
Stop the execution of the register machine.
\end{itemize}

A \emph{configuration} of a register machine is described by the contents
of each register and by the value of the current label, which indicates the
next instruction to be executed.

\end{definition}
\smallskip

In the accepting case, a computation starts with the input of a $k$-vector
of natural numbers in its first $k$ registers and by executing the first
instruction of $P$ (labeled with $l_{0}$); it terminates with reaching the
$HALT$-instruction. Without loss of generality, we may assume all
registers to be empty at the end of the computation.

In the generating case, a computation starts with all registers
being empty and by executing the first instruction of $P$ (labeled
with $l_{0}$); it terminates with reaching the $HALT$-instruction
and the output of a $k$-vector of natural numbers in its first $k$ registers.
Without loss of generality, we may assume all registers $>k$ to be empty
at the end of the computation. The set of vectors of natural numbers 
computed by $M$ in this way is denoted by $Ps(M)$. If we want to generate 
only numbers ($1$-dimensional vectors), then we have the result of a 
computation in register $1$ and the set of numbers computed by $M$ in this 
way is denoted by $N(R)$. By $NRM$ and $PsRM$ we denote the families of 
sets of natural numbers and of sets of vectors of natural numbers, respectively, 
generated by register machines. It is folklore (e.g., see~\cite{Minsky1967}) that 
$PsRE = PsRM$ and $NRE = NRM$ (actually, three registers are sufficient in 
order to generate any set from the family $NRE$, and, in general, $k+2$ 
registers needed to generate any set of from the family $NRE$).

\subsubsection{Partially blind register machines}

In the case when a register machine cannot check whether a register is empty 
we say that it is partially blind: the registers are increased and decreased by 
one as usual, but if the machine tries to subtract from an empty register, then 
the computation aborts without producing any result (that is we may say that the 
subtract instructions are of the form $p:\left( SUB\left( r\right) ,q,abort\right) $; 
instead, we simply will write $p:\left( SUB\left( r\right) ,q,abort\right) $. 
Moreover, acceptance or generation now by definition also requires 
all registers, except the first $k$ output registers, to be empty (which means all 
registers $k + 1, . . . , m$ have to be empty at the end of the computation), 
i.e., there is an implicit test for zero, at the end of a (successful) computation,  
that is why we say that the device is partially blind. By $NPBRM$ and $PsPBRM$ 
we denote the families of sets of natural numbers and of sets of vectors of natural 
numbers, respectively, computed by partially blind register machines. It is known 
(e.g., see~\cite{Freundetal2005}) that partially blind register machines are strictly 
less powerful than general register machines (hence than Turing machines); 
moreover, $NPBRM$ and $PsPBRM$ characterize the number and Parikh sets,  
respectively, obtained by matrix grammars without appearance checking.


\section{Tissue P systems working on vesicles of multisets}
\label{tPsystems}

We first define our basic model of tissue P systems working on vesicles of 
multisets in the maximally parallel set derivation mode:

\begin{definition}
A \emph{tissue P systems working on vesicles of multisets} (a \emph{tPV} 
system for short) is a tuple 
\[
\Pi = \left( L,V,T,R,(i_0,w_0),h\right) 
\]
where
\begin{itemize}
\item $L$ is a set of labels identifying in a one-to-one manner the $|L|$ cells of 
the tissue P system $\Pi$;

\item $V$ is the alphabet of the system,

\item $T$ is the terminal alphabet of the system,

\item $R$ is a set of rules of the form $(i,p,j)$ where 
$i,j\in L$ and $p\in Ins_{V}\cup Del_{V}\cup Sub_{V}$, i.e., 
$p$ is an insertion, deletion or substitution rule over the alphabet $V$;
we may collect all rules from cell $i$ in one set and then write 
$R_i=\{(i,p,j)\mid (i,p,j)\in R\}$, so that $R=\bigcup_{i\in L}R_i$;
moreover, for the sake of conciseness, we may simply write 
$R_i=\{(p,j)\mid (i,p,j)\in R\}$, too;

\item $(i_0,w_0)$ describes the initial vesicle containing the multiset 
$w_0$ in cell $i_0$.
\end{itemize}
\end{definition}

As in the case of NEPs and HNEPs, we call $\Pi $ a \emph{hybrid} tPV 
system if every cell is ``specialized'' in one type of evolution rules from 
(at most) one of the sets $Ins_{V},Del_{V}$, and $Sub_{V}$, respectively.
\bigskip

The tPV system can work with different derivation modes for applying the 
rules in $R$. The simplest case is the sequential mode (abbreviated 
\emph{sequ}), where in each derivation step, with the vesicle enclosing the
multiset $w$ being in cell $i$, exactly one rule $(i,p,j)$ from $R_i$ is applied, 
which in fact means that $p$ is applied to $w$ and the resulting multiset in its 
vesicle is moved to cell $j$. Using the set maximally parallel derivation mode
(abbreviated \emph{smax}), with the vesicle enclosing the multiset $w$ being 
in cell $i$, we apply a non-extendable multiset of rules from $R_i$, which has 
to obey the condition that all the evolution rules $(i,p,j)$ in this multiset of 
rules specify the same target cell $j$.

In any case, the computation of $\Pi $ starts with a vesicle containing the 
multiset $w_0$ in cell $i_0$, and the computation proceeds in the underlying 
derivation mode until an output condition is fulfilled, which in all possible cases 
means that the vesicle has arrived in the output cell $h$. As we are dealing 
with membrane systems, the classic additional condition may be that the 
computation halts, i.e., in cell $h$ no rule can be applied any more to the 
multiset in the vesicle which has arrived there. As we have also specified 
a terminal alphabet, another condition -- for its own or in combination with 
halting -- is that the multiset in the vesicle which has arrived in cell $h$ only 
contains terminal symbols. Hence, we may specify one of the following 
output strategies:
\begin{itemize}
\item $halt$: the only condition is that the system halts, the result is the 
multiset contained in the vesicle to be found in cell~$h$ (which in fact means 
that specifying the terminal alphabet is obsolete);

\item $term$: the resulting multiset contained in the vesicle to be found in
cell~$h$ consists of terminal symbols only (yet the system need not have 
reached a halting configuration).

\item $(halt,term)$: both conditions must be fulfilled, i.e., the system halts 
and the resulting multiset contained in the vesicle to be found in cell~$h$ 
consists of terminal symbols only.
\end{itemize}

The set of all multisets obtained as results of computations in $\Pi$ 
working in the derivation mode $\alpha \in \{sequ, smax\}$ with the 
output being obtained by taking the output condition 
$\beta \in \{halt,term,(halt,term)\}$ is denoted by 
$Ps(\Pi,\alpha ,\beta )$; if we are only interested in the number of symbols 
in the resulting multiset, the corresponding set of natural numbers is denoted 
by $N(\Pi,\alpha ,\beta )$. The families of sets of ($k$-dimensional) vectors 
of natural numbers and sets of natural numbers generated by tPV systems 
with at most $n$ cells working in the derivation mode $\alpha$ and using the 
output strategy $\beta $ are denoted by $Ps(tPV_n,\alpha ,\beta )$ 
($Ps^{k}(tPV_n,\alpha ,\beta )$) and $N(tPV_n,\alpha ,\beta )$, respectively. 
If $n$ is not bounded, we simply omit the subscript in these notations.

\bigskip

We should like to mention that the communication structure between the 
cells in a tPV system is implicitly given by the rules in $R$, i.e., the 
underlying (directed! graph) $G=(N,E)$ with $N$ being the set of nodes
and $E$ being the set of (directed) edges is given by 
\begin{itemize}
\item $N=L$ and

\item $E=\left\{ (i,j) \mid (i,p,j) \in R\right\}$.
\end{itemize}

In general, we do not forbid $G$ to have loops. Moreover, if $G$ can be 
interpreted as a tree, then we call the tPV system a \emph{{\textbf{hierarchical}} 
P system working on vesicles of multisets} (abbreviated \emph{PV system}); in all 
definitions given above for the families of sets of (vectors of) natural 
numbers we then write $PV$ instead of $tPV$.


\section{Results for tissue P systems with vesicles of multisets}
\label{Subst}

Our first result shows that with the derivation mode $smax$ and using all three
types of point mutation rules computational completeness can even be 
obtained with PV systems:

\begin{theorem}\label{TheoremPVsmax}
$PsRE \subseteq Ps(PV,smax ,\beta )$ for any
$\beta \in \{(halt,term),halt,term\}$.
\end{theorem}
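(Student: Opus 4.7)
The plan is to invoke $PsRE = PsRM$ and simulate an arbitrary generating register machine $M = (m, B, l_0, l_h, P)$ by a PV system $\Pi$ whose single vesicle at each moment encodes a configuration of $M$: the value of register $r$ is stored as the multiplicity of a dedicated symbol $a_r$, while the currently active instruction label is present as an additional symbol in the same multiset. The hierarchy of cells is arranged around a root routing cell $R$ with one child per instruction of $M$, plus a few auxiliary cells; whenever the vesicle arrives in $R$ with label $\ell$, an identity rule $\ell \to \ell$ in $R$ with target the child cell for $\ell$ dispatches it to the cell implementing the next instruction.

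For each ADD instruction $p:(ADD(r), q, s)$, the cell for $p$ would hold, for each successor $t \in \{q, s\}$, the substitution $p \to t$ together with the insertion $\lambda \to a_r$, all sharing the common target $R$. In \emph{smax} mode the applied set must be non-extendable while sharing a target, and since only one copy of $p$ is present the two substitutions cannot fire together; the non-deterministic choice between ``include $p \to q$'' and ``include $p \to s$'' exactly mirrors the two branches of the instruction, while the insertion is automatically added to the set, yielding the register increment.

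The SUB simulation is the main obstacle and requires more care, since point mutations do not provide a primitive zero-test. For $p:(SUB(r), q, s)$ I would place in the cell for $p$ a decrement-branch pair $p \to \lambda$ and $a_r \to q$ sharing one target, together with a separate zero-branch rule $p \to s$ pointing to a \emph{different} auxiliary cell so that the two branches cannot mix in one \emph{smax} step. When $a_r$ is present and the decrement target is chosen, the non-extendable set is forced to contain both decrement rules and carries out the decrement together with the label change to $q$; when $a_r$ is absent and the same target is chosen, only $p \to \lambda$ fires, the vesicle loses its label, and in the destination cell no rule is any more applicable, so the vesicle halts outside the output cell $h$ and contributes nothing to the result. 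A wrong zero-guess with $a_r > 0$ keeps the residual $a_r$ inside the vesicle and must be filtered later.

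For the halting phase, I would attach to the cell of $l_h$ a cleanup chain that substitutes $a_r \to b_r$ for every output register $r \le k$ (turning the non-terminal register symbols into their terminal counterparts) and routes the vesicle to $h$ only through cells whose rules share their target with check rules $a_r \to \mathrm{TRAP}$ for $r > k$; any leftover $a_r$ from a wrong zero-guess is thereby converted into a TRAP symbol that either forces an infinite identity-substitution loop between two auxiliary cells (failing $halt$) or remains non-terminal inside the vesicle that reaches $h$ (failing $term$). The hard part is calibrating this cleanup so that the \emph{same} construction works uniformly for all three $\beta \in \{halt, term, (halt, term)\}$; once it does, the case analysis showing $Ps(\Pi, smax, \beta) = Ps(M)$ is routine, and since $Ps(M)$ ranges over all of $PsRM = PsRE$, the claimed inclusion follows.
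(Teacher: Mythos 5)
Your ADD simulation and the decrement branch of SUB are sound and essentially parallel the paper's construction (the paper routes everything through a single root cell $0$ and places the insertion $\lambda\to a_r$ in a separate cell $r$, but that is a cosmetic difference). The genuine gap is in the zero-test branch of $p:(SUB(r),q,s)$. In your construction the rule $p\to s$ fires unconditionally, and the only safeguard against a wrong guess (taking the $s$-branch while $a_r>0$) is a cleanup at the very end that turns leftover symbols $a_r$ with $r>k$ into a trap symbol. That check is too weak: after a wrong zero-guess the simulation simply continues from label $s$ with register $r$ still nonzero, and later instructions may legitimately decrement $r$ down to zero, so at halting time there is no residue left to detect, yet the computation has followed a control path that $M$ itself could never take. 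What you have actually built is a machine with the semantics of a partially blind register machine (decrements get stuck on empty registers, zero-tests are free guesses, and only emptiness of the non-output registers is verified at the end), and $PsPBRM\subsetneq PsRE$, as the paper itself recalls; indeed, your end-of-computation filter with rules $a_r\to\#$ for $r>k$ is precisely the device the paper uses in Theorem~\ref{TheoremPBRM} for the partially blind case.

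To reach all of $PsRE$, a wrong zero-guess must be poisoned at the moment of the guess with a marker that can never disappear. The paper achieves this with the set-maximal parallelism that you otherwise exploit only for the decrement: the zero branch sends the vesicle to a dedicated cell $r_0$ containing the always-applicable rule $s\to s$ together with $a_r\to\#$, both with the same target, so that whenever $a_r$ is present the non-extendable same-target rule set is forced to contain $a_r\to\#$, and the indelible trap symbol $\#$ is introduced immediately; $\#$ then defeats both the $halt$ and the $term$ output conditions for the rest of the computation. Your construction needs this appearance check inside the zero branch itself; with it added, the rest of your argument goes through.
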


\begin{proof}
Let $K$ be an arbitrary recursively enumerable set of $k$-dimensional 
vectors of natural numbers. Then $K$ can be generated by a register 
machine $M$ with two working registers also using decrement instructions 
and $k$ output registers. In order to have a general construction, we do not 
restrict the number of working registers in the following. Let 
$M=\left( m,B,l_{0},l_{h},P\right) $ be a register machine generating $K$.
\medskip

We now define a PV system $\Pi $ generating $K$, 
i.e., $Ps(\Pi ,smax ,\beta )=K$:
\begin{align*}
\Pi		&=\left( L,V,T,R,(i_0,w_0),h\right) ,\\
L		&=\{ r\mid 1\leq r\leq k\} \cup 
		     \{ r,r_-,r_0\mid k+1\leq r\leq m\} \cup \{ h\} ,\\
V		&=  L\cup \{ a_r\mid 1\leq r\leq m\} \cup \{ \# \},\\
T		&=  \{ a_r\mid 1\leq r\leq k\},\\
R		&=  \{ (0,p\to q,r),(0,p\to s,r),(r,\lambda \to a_r,0)\mid 
			p:\left( ADD\left( r\right) ,q,s\right) \in P\},\\
		&\cup \{ (0,p\to q,r_- ),(0,p\to s,r_0 )\mid 
			p:\left( SUB\left( r\right) ,q,s\right) \in P\}\\
		&\cup \{ (r_- , a_r \to \lambda ,0), (r_0 , s \to s,0), (r_0 , a_r \to \# ,0)\mid 
			p:\left( SUB\left( r\right) ,q,s\right) \in P\},\\
		&\cup \{ (0,l_{h}\to \lambda ,h),
			      (h,\# \to \# ,0),(0,\# \to \# ,h)\} ,\\
(i_0,w_0)	&=(0,l_{0}).
\end{align*}

\begin{figure}[htb]
\begin{center}
  %

  \pgfkeys{
    /pgf/arrow keys/.cd,
    shear/.store in=\pgfarrowshear,
    DAhead/.style={
      length = +2pt 1.925 1,
      shear = .8mm,
    },
  }

  \makeatletter
  \newdimen\pgfarrowshear
  \let\oldmacro\pgf@arrow@drawer@shift
  \def\pgf@arrow@drawer@shift{\pgftransformyshift\pgfarrowshear\oldmacro}

  \tikzstyle doublearr=[double,double distance=1.5mm,{<[DAhead]}-{>[DAhead]},
    shorten <=-.1mm,shorten >=-.1mm]

  \begin{tikzpicture}[node distance=5mm and 20mm]
    \tikzstyle label=[draw,circle,minimum size=8mm,inner sep=1mm]
    \tikzstyle var=[densely dashed]

    \node[label] (0) {0};
    \node[label,var,below left=of 0] (r) {$r$};
    \node[label,below right=of 0] (h) {$h$};

    \newcommand{\suboff}{8mm}
    \node[label,var,below right=of r,xshift=-\suboff,inner sep=.5mm] (r-) {$r_-$\strut};
    \node[label,var,below left=of h,xshift=\suboff,inner sep=.5mm] (r0) {$r_0$\strut};

    \draw[doublearr] (0) -- (r);
    \draw[doublearr] (0) -- (h);
    \draw[doublearr] (0) -- (r-);
    \draw[doublearr] (0) -- (r0);

    \node[below=0mm of r] {$ADD(r)$};

    \draw[decorate,decoration={brace,amplitude=1.5mm}]
    ($(r0.south east)+(0,-2mm)$) -- ($(r-.south west)+(0,-2mm)$);
    \node at ($(r-)!.5!(r0)+(1mm,-10mm)$) {$SUB(r)$};

    \node[below=0mm of h,align=left,xshift=2.5mm] {halting\\and trap};
  \end{tikzpicture}
\end{center}
\vspace{-3mm}

\caption{Communication structure of the two-level hierarchical PV
  system.  Each node with a dashed contour is replicated for every
  register $r$.}
\end{figure}
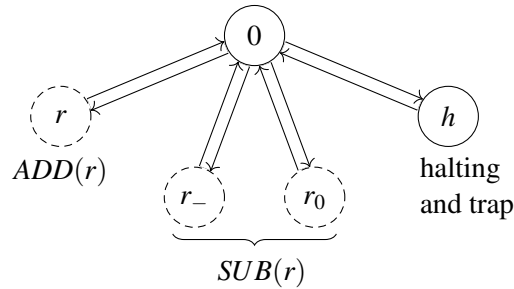

The root of the communication tree is cell $0$. From there, all simulations of 
register machine instructions are initiated:
\begin{description}
\item[$\left( ADD\left( r\right) ,q,s\right) $] is simulated by moving the vesicle from
the root cell to cell $r$ by applying one of the rules from 
$\{ (0,p\to q,r),(0,p\to s,r),(r,\lambda \to a_r,0) \}$; in cell $r$ the number of 
symbols $a_r$ representing the contents of register $r$ is incremented by 
the insertion rule $(r,\lambda \to a_r,0) $, which also sends back the 
vesicle to the root cell.

\item[$\left( SUB\left( r\right) ,q,s\right) $] is simulated by first choosing one of 
the rules from $\{ (0,p\to s,r_0 ),(0,p\to q,r_- )\} $ in a non-deterministic way, 
guessing whether the number of symbols $a_r$ representing the contents of 
register $r$ is zero or not. If the number is not zero, then in cell $r_-$ the 
deletion operation in the rule $(r_- , a_r \to \lambda ,0)$ can be carried out 
and the vesicle is sent back to cell $0$, whereas otherwise the vesicle gets 
stuck in cell $r_-$ and therefore no result can be obtained in the output 
cell~$h$. If the number of symbols $a_r$ has been assumed to be zero and 
the vesicle is in cell $r_0$, then there the rule $(r_0 , s \to s,0)$ can be applied 
in any case, and the vesicle is sent back to cell $0$. Yet if the assumption has 
been wrong, then in parallel the rule $(r_0 , a_r \to \# ,0)$ must be applied, 
thus introducing the trap symbol $\# $. This is the only case in the whole 
construction where the possibility of applying (at least) two rules in parallel 
is used for appearance checking. We point out that both rules have the same 
target $0$.
\end{description}

Any halting computation in $M$ finally reaches the halting instruction labeled 
by $l_h$, and thus in $\Pi $, by applying the rule $(0,l_{h}\to \lambda ,h)$, 
the vesicle obtained so far is moved to the final cell $h$. Provided no trap 
symbol $\#$ has been generated during the simulation of the computation 
in $M$ by the tPV system $\Pi $, the multiset in this vesicle only contains
terminal symbols and the computation in $\Pi $ halts as well.

In sum, we conclude that $Ps(\Pi ,smax ,\beta )=K$ for any
$\beta \in \{(halt,term),halt,term\}$.
\hfill
\end{proof}
\bigskip

The construction given in the preceding proof offers some additional nice 
features:
\begin{itemize}
\item The PV system $\Pi $ is a hybrid one, as in each cell only one kind of 
rules is employed: substitution in cells $0$ and $h$ and in cells $r_0$, 
insertion in cells $r$, deletion in cells $r_-$.

\item The trap rules $(h,\# \to \# ,0),(0,\# \to \# ,h)$, guaranteeing a 
non-halting computation as soon as the introduction of the trap symbol $\# $ 
has been enforced by a wrong guess, are only needed in the case of 
the output strategy $halt$.

\item The vesicle must always leave the current cell whenever a rule 
can be applied.

\item The number of cells in the PV system $\Pi $ only depends on the 
number of registers in the register machine $M$. Suppose $M$ has 
$k$ output registers and $2$ working registers. Since the output 
registers are never decremented, we only need one cell $r$ for each 
such register.  We need 3 cells ($r$, $r_-$, and~$r_0$) for each of
the two working (decrementable) registers.  Finally, we need the 
cells $0$ and $h$, which amounts in a total of $k+2\cdot 3+2=k+8$ 
cells to simulate $M$.  This also means that only 9 cells are 
needed for generating number sets. 
\end{itemize}

If the underlying register machine is partially blind, we only have to 
consider the decrement case, which then still works correctly, whereas 
we can omit the zero test case, and thus can omit the parallelism.
Hence, we immediately infer the following result:

\begin{theorem}\label{TheoremPBRM}
$PsPBRM \subseteq Ps(PV,sequ ,\beta )$ for any
$\beta \in \{(halt,term),halt,term\}$.
\end{theorem}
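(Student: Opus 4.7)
The plan is to reuse essentially the same construction as in the proof of Theorem~\ref{TheoremPVsmax}, but stripped down to exploit the fact that a partially blind register machine never performs a true zero-test: a $SUB$ instruction either successfully decrements and proceeds to the continuation label, or the computation aborts (producing no output). This is exactly what a vesicle that gets ``stuck'' in some cell of a PV system models, so there is no need for appearance checking and hence no need for the parallelism that forced the derivation mode $smax$ in the previous proof.

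Concretely, given a partially blind register machine $M=(m,B,l_0,l_h,P)$ generating the target set $K \in PsPBRM$, I would build a PV system $\Pi$ whose cell set is $L=\{r\mid 1\le r\le k\}\cup\{r,r_-\mid k+1\le r\le m\}\cup\{0,h\}$, dropping the zero-guessing cells $r_0$ together with all rules of the form $(0,p\to s,r_0)$, $(r_0,s\to s,0)$, $(r_0,a_r\to\#,0)$ from the earlier construction, as well as the trap rules involving $\#$. For each $ADD$ instruction $p:(ADD(r),q,s)$ I keep the rules $(0,p\to q,r),\ (0,p\to s,r),\ (r,\lambda\to a_r,0)$; for each partially-blind $SUB$ instruction $p:(SUB(r),q,abort)$ I put just $(0,p\to q,r_-)$ and $(r_-,a_r\to\lambda,0)$; and I keep $(0,l_h\to\lambda,h)$ for termination. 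The initial vesicle is $(0,l_0)$, the output cell is $h$, and the terminal alphabet is $T=\{a_r\mid 1\le r\le k\}$.

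The simulation is then a straightforward step-by-step argument. An $ADD$ step uses exactly two sequential rule applications (change the label, insert $a_r$) and returns the vesicle to cell $0$ carrying the new program-label symbol. A $SUB$ step uses two sequential rule applications (move to $r_-$ with the new label, delete one $a_r$, return to $0$); crucially, if register $r$ is empty then $a_r$ is absent in the multiset, no rule in $R_{r_-}$ is applicable, and the vesicle is trapped in $r_-$ forever -- which is precisely the \emph{abort} behaviour of the partially blind $SUB$. The $HALT$ instruction is simulated by the unique rule sending the vesicle to $h$, after which no rule is applicable in $h$ and the computation halts with the terminal multiset $\prod_{r=1}^{k} a_r^{v_r}$ representing the output vector. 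Because each step applies exactly one rule, the system genuinely works in the $sequ$ mode, and one verifies by induction on the length of the computation that a computation of $M$ with output $(v_1,\dots,v_k)$ corresponds bijectively to a halting computation of $\Pi$ producing $\{a_1^{v_1}\cdots a_k^{v_k}\}$ in cell $h$.

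The only mildly delicate point, and what I would treat as the ``main obstacle,'' is checking that no output strategy $\beta\in\{halt,term,(halt,term)\}$ leaks spurious results. In the partially blind setting the machine is required to have all non-output registers empty upon halting; this has to be matched on the PV side. One has to argue (i) that when the vesicle reaches $h$, no rule of $R_h$ is applicable (so $halt$ is satisfied trivially, since there are no rules with source $h$ at all); and (ii) that the multiset in $h$ contains only symbols from $T=\{a_1,\dots,a_k\}$, using the fact that the working-register symbols $a_{k+1},\dots,a_m$ would only arise from $ADD$ insertions on those registers, and that the partial blindness semantics of $M$ ensures they have all been deleted by matching $SUB$s before $l_h$ is reached. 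These two facts together give $Ps(\Pi,sequ,\beta)=K$ for every admissible $\beta$, establishing the desired inclusion $PsPBRM\subseteq Ps(PV,sequ,\beta)$.
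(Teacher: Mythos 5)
Your overall approach is the same as the paper's: take the construction from Theorem~\ref{TheoremPVsmax}, delete the zero-test branch ($r_0$ and its rules), and observe that the remaining decrement simulation is inherently sequential, with a stuck vesicle in $r_-$ modelling the abort. That part is fine. However, there is a genuine gap in how you handle the final zero check, and it breaks the claim for $\beta=halt$.

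You justify dropping all trap machinery by asserting that ``the partial blindness semantics of $M$ ensures'' the working registers $a_{k+1},\dots,a_m$ ``have all been deleted by matching $SUB$s before $l_h$ is reached.'' This is not what partial blindness means: the machine may perfectly well reach $l_h$ with non-empty working registers; such a computation is simply discarded \emph{a posteriori} by the implicit end-of-computation zero test -- it is a filter on halting configurations, not an operational guarantee. In your $\Pi$, such a run reaches cell $h$ carrying some $a_r$ with $k+1\leq r\leq m$; since $R_h=\emptyset$ in your construction, the system halts there, and under the output strategy $halt$ (where the terminal alphabet is explicitly irrelevant) this multiset is collected as a result. Hence $Ps(\Pi,sequ,halt)$ may strictly contain $K$. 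Your argument does go through for $\beta\in\{term,(halt,term)\}$, where the presence of a non-terminal $a_r$ disqualifies the output, but the theorem claims all three strategies. The paper closes exactly this hole by adding the rules $\{(h,a_r\to\#,0)\mid k+1\leq r\leq m\}$ together with the trap loop $(h,\#\to\#,0),(0,\#\to\#,h)$, so that a vesicle arriving in $h$ with a leftover working-register symbol enters an infinite loop and never halts. You need this (or an equivalent mechanism) to make the $halt$ case correct.
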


\begin{proof}
Let $K\in PsPBRM$, i.e., the vector set $K$ can be generated by a partially 
blind register machine $M=\left( m,B,l_{0},l_{h},P\right) $.
As in the preceding proof, we now define a PV system $\Pi $ generating $K$ 
in the sequential derivation mode, i.e., $Ps(\Pi ,sequ,\beta )=K$:
\begin{align*}
\Pi		&=\left( L,V,T,R,(i_0,w_0),h\right) ,\\
L		&=\{ r\mid 1\leq r\leq k\} \cup 
		     \{ r,r_-\mid k+1\leq r\leq m\} \cup \{ h\} ,\\
V		&=  L\cup \{ a_r\mid 1\leq r\leq m\} \cup \{ \# \},\\
T		&=  \{ a_r\mid 1\leq r\leq k\},\\
R		&=  \{ (0,p\to q,r),(0,p\to s,r),(r,\lambda \to a_r,0)\mid 
			p:\left( ADD\left( r\right) ,q,s\right) \in P\},\\
		&\cup \{ (0,p\to q,r_- ), (r_- , a_r \to \lambda ,0)\mid 
			p:\left( SUB\left( r\right) ,q\right) \in P\},\\
		&\cup \{ (0,l_{h}\to \lambda ,h),
			      (h,\# \to \# ,0),(0,\# \to \# ,h)\} 
		    \cup \{ (h,a_r \to \# ,0)\mid k+1\leq r\leq m\},\\
(i_0,w_0)	&=(0,l_{0}).
\end{align*}

The simulation of the computations in $M$ by $\Pi $ works in a similar way
as in the preceding proof, with the main reduction that no zero test case 
has to be simulated, hence, everything can be carried out in a sequential way.

Any halting computation in $M$ finally reaches the halting instruction labeled 
by $l_h$, and thus in $\Pi $, by applying the rule $(0,l_{h}\to \lambda ,h)$, 
the vesicle obtained so far is moved to the final cell $h$. Provided no 
non-terminal symbol $a_r$ with $ k+1\leq r\leq m$ is still present, the 
computation in $\Pi $ will halt, but otherwise the trap symbol $\#$ will be 
introduced by (one of) the rules from $ \{ (h,a_r \to \# ,0)\mid k+1\leq r\leq m\}$,
thus causing an infinite loop.

In sum, we conclude that $Ps(\Pi ,sequ ,\beta )=K$ for any
$\beta \in \{(halt,term),halt,term\}$.
\hfill
\end{proof}

\bigskip

The following corollary is immediate consequence of 
Theorem~\ref{TheoremPVsmax} proved above:

\begin{corollary}\label{CorollarytPV}
$PsRE = Ps(PV,smax ,\beta ) = Ps(tPV,smax ,\beta )$ for any
$\beta \in \{(halt,term),halt,term\}$.
\end{corollary}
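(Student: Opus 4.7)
The plan is to close the chain of inclusions
\[
PsRE \;\subseteq\; Ps(PV,smax,\beta) \;\subseteq\; Ps(tPV,smax,\beta) \;\subseteq\; PsRE,
\]
the first of which is exactly Theorem~\ref{TheoremPVsmax} and may simply be cited.

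For the middle inclusion $Ps(PV,smax,\beta) \subseteq Ps(tPV,smax,\beta)$, I would observe that every PV system is, by definition, a tPV system whose induced communication graph $G=(N,E)$ happens to be a tree; no modification of the rules, the vesicle, or the derivation mode is required, so the generated vector set is unchanged. This is a one-line remark.

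The only inclusion that requires any real argument is $Ps(tPV,smax,\beta) \subseteq PsRE$. I would invoke the Church--Turing thesis informally, or, if one prefers a slightly more detailed justification, sketch how a Turing machine can enumerate all halting (respectively, terminal-reaching) computations of a given tPV system $\Pi$. A configuration of $\Pi$ is a pair consisting of the current cell label in $L$ and the finite multiset $w \in V^\circ$ inside the vesicle; both components are finitely encodable. Given a configuration, the set of applicable rule multisets in the current cell -- for the $smax$ mode, all non-extendable multisets of rules from $R_i$ whose target cells agree -- is finite and effectively computable, so successor configurations can be enumerated algorithmically. A standard dovetailing construction then enumerates every multiset that can reach cell $h$ under the chosen output condition $\beta$, and the resulting set of Parikh vectors lies in $PsRE$.

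The only mild subtlety -- and I expect this to be the main thing worth mentioning -- is handling the $term$ output strategy, where a valid output is collected as soon as the vesicle arrives in $h$ with a terminal-only multiset, even though the computation need not halt. The dovetailing simulator must therefore output the current multiset at every step in which both conditions (being in $h$ and, where required, consisting of terminal symbols) are satisfied, rather than waiting for a halt. With this caveat, the three output strategies are handled uniformly, and combining the three inclusions yields the claimed equalities for every $\beta \in \{(halt,term), halt, term\}$.
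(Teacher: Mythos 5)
Your proposal is correct and follows essentially the same route as the paper: the first inclusion is Theorem~\ref{TheoremPVsmax}, the middle one is the observation that every PV system is by definition a tPV system, and the final inclusion $Ps(tPV,smax,\beta)\subseteq PsRE$ is justified by the Church--Turing thesis (the paper likewise omits a direct construction). Your added remarks on effective enumeration of successor configurations and on handling the $term$ output strategy via dovetailing are a reasonable elaboration of what the paper leaves implicit, but do not constitute a different approach.
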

\begin{proof} 
By definition, any PV system is a tPV system, too. Hence, it only remains 
to show that $Ps(tPV,smax ,\beta ) \subseteq PsRE$, yet we  omit a direct 
construction as the result can be inferred from the Turing-Church thesis.
\hfill
\end{proof}

\bigskip

We now also show that the computations of a sequential tPV system 
using the output strategy $term$ can be simulated by a partially blind 
register machine. 

\begin{theorem}\label{TheoremPBRMrev}
$Ps(tPV,sequ ,term) \subseteq PsPBRM$.
\end{theorem}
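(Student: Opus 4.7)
The plan is to construct a partially blind register machine $M$ that faithfully simulates the sequential computation of a given tPV system $\Pi = (L, V, T, R, (i_0, w_0), h)$. The key observation is that, without polarizations and in sequential mode, the only ``test'' any tPV rule performs is whether the symbol to be deleted or substituted is actually present, and this coincides precisely with the semantics of PBRM's $SUB$ instruction, which aborts on an empty register and thereby mirrors exactly the inapplicability of a deletion or substitution rule when the prerequisite symbol is absent. I would use one register $R_a$ for each $a\in V$, with the $|T|$ registers indexed by $T$ designated as output registers, plus one auxiliary register $z$ used for non-deterministic branching. The current cell of the vesicle is encoded by the current instruction label: for each $i\in L$ there is a label $[i]$. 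A rule $(i,p,j)\in R_i$ is simulated by a subprogram leading from $[i]$ to $[j]$: an insertion $\lambda\to b$ is an $ADD$ on $R_b$; a deletion $a\to\lambda$ is a $SUB$ on $R_a$ (whose abort branch models inapplicability); a substitution $a\to b$ is a $SUB$ on $R_a$ followed by an $ADD$ on $R_b$, routed through one intermediate label. The initial label $l_0$ of $M$ first issues a fixed sequence of $ADD$s placing $w_0$ into the registers, and then jumps to $[i_0]$.

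The non-deterministic choice among the rules of $R_i$ is realised by a short binary tree of $ADD$-on-$z$ branchings whose leaves each perform a matching $SUB$ on $z$, so that $z$ is restored to $0$ before the selected rule is simulated; hence the auxiliary register leaves no trace. The $term$ output strategy, which collects a result whenever the vesicle is in $h$ with only terminal symbols and does not require $\Pi$ to halt, is captured by adding at label $[h]$ one more non-deterministic alternative: besides simulating rules from $R_h$, $M$ may branch via the $ADD$/$SUB$ pair on $z$ straight to the $HALT$ label $l_h$. Because PBRM acceptance implicitly tests that all non-output registers are empty, the registers associated with $V\setminus T$ and the register $z$ must vanish at $l_h$, which is exactly the condition that at the chosen halting moment the vesicle-content consists of terminals only. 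Thus the output vectors of $M$ are precisely the Parikh vectors of the terminal multisets that $\Pi$'s vesicle can carry while visiting $h$.

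The main obstacle is establishing the bidirectional correspondence with care. The forward direction (every $\Pi$-output is produced by $M$) is straightforward: step-by-step simulation, with $M$ choosing to halt at the first or any subsequent moment at which the vesicle is in $h$ with terminal content. The more delicate direction is that every halting $M$-computation corresponds to a legal $\Pi$-computation producing the same output. This hinges on three points, each of which must be verified formally: successful PBRM $SUB$s exactly witness the applicability of the simulated deletion or substitution rules; the auxiliary register $z$ is locally balanced across every choice subprogram, so it cannot smuggle information between simulated steps; and the final implicit zero test forbids spurious acceptances with leftover symbols from $V\setminus T$. Together these give $Ps(M)=Ps(\Pi,sequ,term)$, hence $Ps(tPV,sequ,term)\subseteq PsPBRM$.
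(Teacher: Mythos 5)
Your construction is essentially the paper's own proof: one register per symbol, insertion simulated by $ADD$, deletion by $SUB$ (whose abort on an empty register models inapplicability), substitution by a $SUB$--$ADD$ pair through an intermediate label, and the $term$ condition captured by the partially blind machine's implicit final zero test on the non-output registers. The only difference is bookkeeping -- you realize the non-deterministic choice among the rules of $R_i$ with a balanced $ADD$/$SUB$ gadget on an auxiliary register, where the paper instead allows several instructions to share one label -- and both devices are sound.
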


\begin{proof} (\textit{Sketch})
Let $\Pi=\left( L,V,T,R,(i_0,w_0),h\right) $ be an arbitrary tPV system working 
in the sequential derivation mode yielding an output in the output cell provided 
the multiset in the vesicle having arrived there contains only terminal symbols;
without loss of generality we assume $L=\{ i\mid 1\leq i\leq l\}$.

We now construct a register machine $M=\left( m,B,l_{0},l_{h},P\right) $ 
generating $Ps(\Pi ,sequ ,term )$, yet using a more relaxed definition for 
the labeling of instructions in $M$, i.e., one label may be used for different 
instructions, which does not affect the computational power of the register 
machine as shown in~\cite{Freundetal2005}. For example, instead of a 
nondeterministic ADD-instruction $p:\left( ADD\left( r\right) ,q,s\right)$ we
use the two ADD-instructions $p:\left( ADD\left( r\right) ,q\right)$ and 
$p:\left( ADD\left( r\right) ,s\right)$. Moreover, we omit the generation of 
$w_0$ in $l_0$ by a sequence of $ADD$-instructions finally ending up with 
label $l_0$ and the correct values in registers $r$ for the numbers of symbols 
$a_r$ in cell $l_0$. 
\medskip

We now sketch how the rules of $\Pi $ can be simulated by register machine 
instructions in $M$:

\begin{description}
\item[$(i,\lambda \to b,j)$ ] is simulated by 
	$i:\left( ADD\left( b\right) ,j\right) $.

\item[$(i,a\to \lambda ,j)$ ] is simulated by 
	$i:\left( SUB\left( a\right) ,j\right) $.

\item[$(i,a\to b,j)$ ] is simulated by the sequence of two instructions
	$i:\left( SUB\left( a\right) ,i^{\prime }\right) $ and 
	$i^{\prime }:\left( ADD\left( b\right) ,j\right) $ using an intermediate 
	label $i^{\prime }$.
\end{description}
Hence, for these simulations we may need $2l$ labels in the sense explained 
above. If a vesicle reaches the final cell $h$ with the multiset inside only 
consisting of terminal symbols, we also have to allow $M$ to have this 
multiset as a result: this goal can be accomplished by using the final 
sequence 
\begin{align*}
 &h:\left( ADD\left( 1\right) ,\tilde{h}\right) ,\\
 &\tilde{h}:\left( SUB\left( 1\right) ,\hat{h}\right) ,\\
 &\hat{h}:HALT.
\end{align*}
We observe that $\tilde{h},\hat{h}$ are labels different from $h^{\prime }$.
Since $\hat{h}$ is now the only halting instruction of $M$, it must
reset to zero all its working registers before reaching $\hat{h}$ to
satisfy the final zero check, which corresponds to $\Pi$ producing a
multiset consisting exclusively of terminal symbols.

In sum, we conclude that $Ps(M)=Ps(\Pi ,sequ ,term)$.
\hfill
\end{proof}
\bigskip

As a consequence of Theorems~\ref{TheoremPBRM} 
and~\ref{TheoremPBRMrev} we obtain:

\begin{corollary}\label{CorollaryPBRM}
$PsPBRM = Ps(PV,sequ ,term)$.
\end{corollary}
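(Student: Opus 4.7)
The plan is to obtain both inclusions as direct consequences of the two preceding theorems, with no further construction required. The corollary is essentially a bookkeeping step that pins down the two-sided inclusion precisely for the choice $\beta = term$ on the PV (tree-shaped) side.

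For the forward inclusion $PsPBRM \subseteq Ps(PV,sequ,term)$, I would simply instantiate Theorem~\ref{TheoremPBRM} at $\beta = term$. That theorem already provides a PV system simulating any partially blind register machine in the sequential mode, and the output condition $term$ is explicitly covered by the statement, so nothing beyond pointing at the theorem is needed.

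For the reverse inclusion $Ps(PV,sequ,term) \subseteq PsPBRM$, I would invoke the definitional fact that every PV system is a tPV system (a tree is a special graph), so $Ps(PV,sequ,term) \subseteq Ps(tPV,sequ,term)$. Then Theorem~\ref{TheoremPBRMrev} yields $Ps(tPV,sequ,term) \subseteq PsPBRM$, and the two inclusions compose to give the desired containment.

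There is no real obstacle here: all the work was done in Theorems~\ref{TheoremPBRM} and~\ref{TheoremPBRMrev}. The only subtlety worth flagging is that the equality is claimed \emph{only} for the output strategy $term$; the analogous equality for $halt$ or $(halt,term)$ is not asserted, because Theorem~\ref{TheoremPBRMrev} is stated specifically for $term$ (it is the $term$ condition that gets translated into the final zero-check of the partially blind register machine via the $h \to \tilde h \to \hat h$ sequence). So the proof consists of exactly the chain $PsPBRM \subseteq Ps(PV,sequ,term) \subseteq Ps(tPV,sequ,term) \subseteq PsPBRM$.
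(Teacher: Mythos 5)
Your argument is correct and is exactly the chain the paper intends: the forward inclusion is Theorem~\ref{TheoremPBRM} with $\beta=term$, and the reverse inclusion follows from $Ps(PV,sequ,term)\subseteq Ps(tPV,sequ,term)\subseteq PsPBRM$ via Theorem~\ref{TheoremPBRMrev}, since every PV system is a tPV system. The paper states the corollary as an immediate consequence of those two theorems without further elaboration, so your write-up matches its proof.
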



\section{Polarized tissue P systems with vesicles of multisets}
\label{Pol}

In a polarized tissue P system $\Pi $ working on vesicles of multisets, each cell 
gets assigned an elementary polarization from $\{-1,0,1\}$; each symbol from the 
alphabet $V$ also has an integer polarization but every terminal symbol from the 
terminal alphabet has polarization $0$. As we shall see later, we can even restrict 
ourselves to elementary polarizations from $\{-1,0,1\}$ for each symbol, too.

Given a multiset, we need an evaluation function computing the polarization of 
the whole multiset from the polarizations of the symbols it contains. Given the 
result $m$ of this evaluation of the multiset in the vesicle, we apply the
sign function $sign(m)$, which returns one of the values $+1/0/-1$, provided 
that $m$ is a positive integer / is $0$ / is a negative integer, respectively. 

The main difference between polarized tPV systems and normal tPV systems,
besides the polarizations assigned to symbols and multisets as well as to the 
cells, is the way the resulting vesicles are moved from one cell to another one:
although in the rules themselves still a target is specified, the vesicle can only 
move to a cell having the same polarization as the multiset contained in it. 
As a special additional feature we require that the vesicle must not stay 
in the current cell even if its polarization would fit (if there is no other cell 
with a fitting polarization, the vesicle is eliminated from the system).
As by the convention mentioned above we assume every terminal symbol 
from the terminal alphabet to have polarization $0$, it is necessary that the 
output cell itself also has to have polarization $0$.

\begin{definition}
A \emph{polarized tissue P systems working on vesicles of multisets} 
(a \emph{ptPV} 
system for short) is a tuple 
\[
\Pi = \left( L,V,T,R,(i_0,w_0),h,\pi _L,\pi _V,\varphi \right) 
\]
where
\begin{itemize}
\item $L$ is a set of labels identifying in a one-to-one manner the $|L|$ 
cells of the tissue P system $\Pi$;

\item $V$ is the polarized alphabet of the system,

\item $T$ is the terminal alphabet of the system (the terminal symbols 
have no polarization, i.e., polarization $0$),

\item $R$ is a set of rules of the form $(i,p,j)$ where 
$i,j\in L$ and $p\in Ins_{V}\cup Del_{V}\cup Sub_{V}$, i.e., 
$p$ is an insertion, deletion or substitution rule over the alphabet $V$;
we may collect all rules from cell $i$ in one set and then write 
$R_i=\{(i,p,j)\mid (i,p,j)\in R\}$, so that $R=\bigcup_{i\in L}R_i$;
moreover, for the sake of conciseness, we may simply write 
$R_i=\{(p,j)\mid (i,p,j)\in R\}$, too;

\item $(i_0,w_0)$ describes the initial vesicle containing the multiset 
$w_0$ in cell $i_0$;

\item $\pi _L$ is the function assigning an integer polarization to each 
cell (as already mentioned above, we here restrict ourselves to the 
elementary polarizations from $\{-1,0,1\}$);

\item $\pi _V$ is the function assigning an integer polarization to each 
symbol in $V$ (as already mentioned above, we here restrict ourselves 
to the elementary polarizations from $\{-1,0,1\}$);

\item $\varphi $ is the evaluation function yielding an integer value for each 
multiset.
\end{itemize}
\end{definition}

As in the case of NEPs and HNEPs, we call $\Pi $ a \emph{hybrid} ptPV 
system if a cell is ``specialized'' in one type of evolution rules from  
(at most) one of the sets $Ins_{V},Del_{V}$, and $Sub_{V}$, respectively.
\bigskip

The ptPV system again can work with different derivation modes for applying 
the rules in $R$, e.g., the sequential mode \emph{sequ} or the set maximally 
parallel derivation mode \emph{smax}. Yet a derivation step now consists 
of two substeps -- the \emph{evolutionary step} with applying the rule(s) from 
$R$ in the way required by the derivation mode (caution: we allow the set 
of applied rules to be empty) and the \emph{communication step} with sending 
the vesicle to a cell with the same polarization as the multiset in it.

In the following, we will only use the evaluation function $\varphi$ which 
computes the value of a multiset as the sum of the values 
of the symbols contained in it; we write $\varphi _s$ for this function.

In any case, the computation of $\Pi $ starts with a vesicle containing the 
multiset $w_0$ in cell $i_0$ (obviously, the initial multiset $w_0$ has to have 
the same polarization as the initial cell $i_0$), and the computation proceeds 
using the underlying derivation mode for the evolutionary steps until an 
output condition is fulfilled, which in all possible cases means that the vesicle 
has arrived in the output cell $h$. Again we use one of the output strategies
$halt$, $term$ and $(halt,term)$.

The set of all multisets obtained as results of computations in $\Pi$ 
working in the derivation mode $\alpha \in \{sequ, smax\}$, using the 
evaluation function $\varphi _s$ and the output condition 
$\beta \in \{halt,term,(halt,term)\}$, is denoted by $Ps(\Pi,\alpha ,\beta )$; 
if we are only interested in the number of symbols in the resulting multiset, 
the corresponding set of natural numbers is denoted by $N(\Pi,\alpha ,\beta )$. 
The families of sets of ($k$-dimensional) vectors of natural numbers and sets 
of natural numbers generated by ptPV systems with at most $n$ cells working 
in the derivation mode $\alpha$ and using the output strategy $\beta $ are 
denoted by $Ps(ptPV_n,\alpha ,\beta )$ ($Ps^{k}(ptPV_n,\alpha ,\beta )$) and 
$N(ptPV_n,\alpha ,\beta )$, respectively. If $n$ is not bounded, we simply omit 
the subscript in these notations.

\bigskip

We should like to mention that again the communication structure between 
the cells in a ptPV system is implicitly given by the rules in $R$, i.e., the 
underlying (directed! graph) $G=(N,E)$ with $N$ being the set of nodes
and $E$ being the set of (directed) edges is given by 
\begin{itemize}
\item $N=L$ and

\item $E=\left\{ (i,j) \mid (i,p,j) \in R\right\}$.
\end{itemize}

In general, we do not forbid $G$ to have loops. Moreover, if $G$ can be 
interpreted as a tree, then we call the ptPV system $\Pi $ a 
\emph{{\textbf{hierarchical}} polarized P system working on vesicles of multisets} 
(abbreviated \emph{pPV system}); in all definitions given above for the families of 
sets of (vectors of) natural numbers we then write $pPV$ instead of $ptPV$.
\bigskip

Moreover, there is another variant of interpreting the functioning of the 
ptPV $\Pi $ if $G$ is interpreted as an undirected graph 
$(L,\left\{ \left\{i,j\right\} \mid (i,p,j) \in R\right\})$. Then we may adopt 
the way of communication from polarized HNEPs and instead of 
specifying the set of rules as given above, change the definition in the 
following way:
\[
\Pi = \left( L,V,T,R,(i_0,w_0),h,\pi _L,\pi _V,\varphi ,G\right) 
\]
where $G$ now is an undirected graph defining the communication structure 
between the cells, and the rules in $R$ are specified without targets, i.e., 
they are written as $(i,p)$ instead of $(i,p,j)$ as the targets now are specified 
by the communication graph $G$. Yet as $G$ is an \emph{undirected} graph 
this makes a big difference as communication now by default is bidirectional,
i.e., we cannot enforce the direction of the movement of the vesicle any more.
According to these explanations it becomes obvious that this variant is a 
special case of ptPV systems. In fact, in this variant, if $(i,p,j)$ 
is a rule in $R$, then also $(j,p,i)$ must be a rule in $R$. As a special 
variant of ptPV systems, we then call it a uptPV system (with u specifying 
that the communication structure is an undirected graph).

Even with uptPV systems we can obtain computational completeness with 
the sequential derivation mode:
\begin{theorem}
$PsRE\subseteq Ps(uptPV_n,sequ ,term )$.
\end{theorem}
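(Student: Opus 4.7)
The plan is to adapt the construction of Theorem~\ref{TheoremPVsmax} to the sequential derivation mode, replacing the parallel appearance-check used there for the zero-test of $SUB$ by a test driven entirely by the polarization of the whole vesicle. In a uptPV system the rules carry no target and the graph is undirected, so communication in each cycle is decided purely by matching the sign of $\varphi_s$ applied to the multiset inside the vesicle against the polarization of the receiving cell; polarization-based routing is therefore the only tool available, but also exactly the one that is needed to recover the expressivity lost by dropping set maximality.

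Fix $K\in PsRE$ and a register machine $M=(m,B,l_0,l_h,P)$ generating $K$, reduced to two decrementable working registers plus the $k$ output registers. A vesicle in the uptPV system $\Pi$ under construction carries the multiset $\langle p\rangle\, a_1^{n_1}\cdots a_m^{n_m}$, where $\langle p\rangle$ encodes the current instruction label. I assign polarization $0$ to every terminal $a_r$ ($r\le k$) and to $\langle l_h\rangle$, polarization $+1$ to each decrementable $a_r$, and, to each intermediate state symbol, the polarization in $\{-1,0,+1\}$ that at its instant of use makes the sign of the total polarization equal to the direction in which routing should occur. An $ADD$ instruction $p:(ADD(r),q,s)$ is simulated by a short sequential cascade of state substitutions together with a single insertion $\lambda\to a_r$, each intermediate state carrying exactly the polarization required by the next communication step. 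A $SUB$ instruction $p:(SUB(r),q,s)$ is the delicate part: before testing $n_r$, every other decrementable register $r'$ is \emph{cloaked} by iterated substitutions $a_{r'}\to\bar a_{r'}$ to polarization-$0$ copies $\bar a_{r'}$, so that once cloaking has completed the sign of the vesicle polarization is exactly $\mathrm{sign}(n_r)$. The communication step then routes either to a polarization-$0$ zero cell (continuing to $\langle s\rangle$) or to a polarization-$+1$ decrement cell (where a deletion $a_r\to\lambda$ is applied and the state becomes $\langle q\rangle$), after which inverse substitutions $\bar a_{r'}\to a_{r'}$ uncloak the other registers. Reaching $\langle l_h\rangle$ triggers a final substitution $\langle l_h\rangle\to\lambda$ that sends the now purely terminal vesicle into the output cell $h$ of polarization $0$.

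The main obstacle is that, by the definition of the evolutionary step in a ptPV system, the empty multiset of rules is always an admissible choice, so a computation can try to \emph{skip} a mandatory substitution in the cloaking loop or in the zero-check and then bluff its way into a wrong branch. I overcome this by making each loop self-verifying through its polarizations: the cloaking cell has polarization $+1$ and its only non-loop neighbors are a polarization-$0$ zero cell and a polarization-$+1$ decrement cell, so the vesicle can only reach the zero branch after its polarization has actually dropped to $0$, which forces every remaining $a_{r'}$ to be cloaked; at the decrement cell, the rule $a_r\to\lambda$ is inapplicable when $n_r=0$ and the companion continuation cell carries an $a\to\#$ trap rule that introduces a non-terminal symbol, so any wrongly guessed non-zero branch is silently discarded under the output strategy $term$. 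A concluding bookkeeping step, which I would carry out only after the cells and rules are fully fixed, verifies that each symbol and each cell uses only one of the three polarizations $-1,0,+1$, and that the communication graph is genuinely undirected because every rule $(i,a\to b)$ is paired with a symmetric rule contributing the same unordered edge; this yields $K\in Ps(uptPV_n,sequ,term)$ for an $n$ depending only on $m$ and $|P|$.
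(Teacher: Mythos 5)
Your overall plan---route the vesicle by the sign of $\varphi_s$ and realize the zero-test of $SUB$ through polarization rather than through set-maximal parallelism---is the right one, but the concrete polarization scheme you choose does not work, and it differs essentially from the paper's. The paper keeps \emph{every} register symbol $a_r$ at polarization $0$ and lets only the instruction symbol ($p$, $p^+$, $p^-$) carry charge; non-emptiness of register $r$ is then detected by a single \emph{forced} substitution $a_r\to a_r^{+}$ in a cell of polarization $-1$, which bumps the sign of the vesicle from $-1$ to $0$ and thereby blocks the road to the zero-branch cell. You instead give every decrementable $a_r$ polarization $+1$. Under the sum evaluation with $sign$ and only elementary symbol polarizations, this means the sign of the vesicle is $+1$ at \emph{every} communication step in which some working register is non-empty, and no single state symbol (bounded by $\pm 1$) can override a contribution of $+n_{k+1}+\dots+n_m$. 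Consequently your claim that each intermediate state symbol can be given ``the polarization that makes the sign equal to the direction in which routing should occur'' is false: the $ADD$ cascade, the return to the root cell, and the entry into the halting cell all become reachable or unreachable depending on the current register contents, not on the instruction being simulated. This is precisely the design problem the paper's polarization-$0$ convention for $a_r$ is there to avoid.

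Even granting the cloaking idea, there is a second, independent soundness gap. Cloaking is self-verifying only in one direction: the zero branch (polarization $0$) is reachable only after all $a_{r'}$ have been cloaked, as you note. But \emph{uncloaking} $\bar a_{r'}\to a_{r'}$ only raises the sum, so there is no polarization signal for ``no cloaked copy is left,'' and in the sequential mode the system may nondeterministically leave the uncloaking loop early. A leftover $\bar a_{r'}$ (polarization $0$) then makes a later zero-test on register $r'$ report ``empty'' on a non-empty register, and that leftover can be uncloaked by the uncloak phase of a still later $SUB$, so it need not survive into the final multiset; the output strategy $term$ therefore does not filter out these runs, and $\Pi$ can produce vectors outside $Ps(M)$. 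Your trap-symbol argument covers wrong guesses at the decrement cell but not this corruption of subsequent tests. To repair the proof you would have to abandon the global $+1$ charge on register symbols and adopt something like the paper's mechanism, where at most one register symbol is ever temporarily charged ($a_r\to a_r^{+}$ for the zero check, $a_r\to a_r^{-}$ for the decrement) and all routing between instructions is carried by the label symbol alone.
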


\begin{proof}
Let $M=\left( m,B,l_{0},l_{h},P\right) $ be an arbitrary register machine 
generating $k$-dimensional vectors. We now construct a uptPV system 
$\Pi $ generating the same set of multisets as $M$, i.e.,
$Ps(\Pi ,sequ,term )=Ps(M)$.
\medskip

\begin{figure}[htb]
  \begin{center}
    \begin{tikzpicture}[node distance=5mm and 12mm]
      \tikzstyle cell=[draw,circle,minimum size=8mm,inner sep=.5mm]
      \tikzstyle var=[densely dashed]
      \tikzstyle zone=[draw,densely dotted]
      \newcommand{\pol}[1]{\small $\langle #1 \rangle$}

      \node[cell,
        label={[inner sep=0mm]90:\pol{0}}
      ] (0) {0};

      \node[cell,above left=of 0,
        label={[inner sep=0mm,name=0' lab]165:\pol{0}}
      ] (0') {$0'$};
      \node[cell,var,left=of 0,
        label={[inner sep=0mm,name=r+ lab]165:\pol{+}}
      ] (r+) {$r_+$\strut\hspace{-.6mm}~};
      \node[cell,var,below left=of 0,
        label={[inner sep=0mm]165:\pol{+}}
      ] (tilde r+) {$\tilde r_+$\strut};

      \newcommand{\topang}{35}
      \newcommand{\botang}{-35}

      \node[cell,var,below right=of 0,
        label={[inner sep=0mm]\botang:\pol{+}}
      ] (r-) {$r_-$\strut};
      \node[cell,var,above right=of 0,
        label={[inner sep=0mm]\topang:\pol{-}}
      ] (r0) {$r_0$\strut};

      \tikzset{node distance=5mm and 7mm}
      \node[cell,var,right=of r-,
        label={[inner sep=0mm]\botang:\pol{0}}
      ] (tilde r-) {$\tilde r_-\strut$};
      \node[cell,var,right=of tilde r-,
        label={[inner sep=0mm]\botang:\pol{-}}
      ] (bar r-) {$\bar r_-\strut$};
      \node[cell,right=of bar r-,
        label={[inner sep=0mm,name=0- lab]\botang:\pol{0}}
      ] (0-) {$0_-$\strut\hspace{-.6mm}~};

      \node[cell,var,right=of r0,
        label={[inner sep=0mm]\topang:\pol{-}}
      ] (tilde r0) {$\tilde r_0$\strut};
      \node[cell,draw=none,right=of tilde r0] (dummy) {\phantom{$\tilde r_0$}\strut};
      \node[cell,right=of dummy,
        label={[inner sep=0mm,name=00 lab]\topang:\pol{0}}
      ] (00) {$0_0$\strut\hspace{-.6mm}~};

      \draw (0) -- (0');
      \draw (0') -- (r+);
      \draw (r+) -- (tilde r+) -- (0);

      \draw (0) -- (r0);
      \draw (r0) -- (tilde r0);
      \draw (tilde r0) -- (00);

      \draw (0) -- (r-);
      \draw (r-) -- (tilde r-);
      \draw (tilde r-) -- (bar r-);
      \draw (bar r-) -- (0-);

      \draw (00) to[out=-165,in=7] (0);
      \draw (0-) to[out=155,in=-2] (0);

      \node[zone,fit={(0' lab) (r+ lab) (tilde r+)}] (add zone) {};
      \node[below=0 of add zone] {$ADD(r)$};

      \node[zone,fit={(r0) (00 lab)}] (sub0 zone) {};
      \node[above=0 of sub0 zone] {$SUB(r)$, empty register $r$};

      \node[zone,fit={(r-) (0- lab)}] (sub- zone) {};
      \node[below=0 of sub- zone] {$SUB(r)$, successful decrement of $r$};

      \newcommand{\haltang}{165}
      \node[cell,below=of 0,yshift=-5mm,
        label={[inner sep=0mm,name=lh lab]\haltang:\pol{0}}
      ] (lh) {$l_h$};
      \node[cell,below=of lh,
        label={[inner sep=0mm]\haltang:\pol{0}}
      ] (tilde lh) {$\tilde l_h$};
      \draw (0) -- (lh) -- (tilde lh);
      \node[zone,fit={(lh lab) (tilde lh)}] (halt zone) {};
      \node[below=0 of halt zone,xshift=.5mm] {halting};
    \end{tikzpicture}
  \end{center}

  \caption{The communication graph $G$ of the computationally complete
    uptPV system.  We also represent the polarizations of the nodes in
    angular brackets.  Each node with a dashed contour is replicated
    for every register $r$.}
\end{figure}

\begin{align*}
\Pi 	&= \left( L,V,T,R,(0,l_0),\tilde{l}_h,\pi _L,\pi _V,\varphi _s,G\right) ,\\
L	&=\{0,0^{\prime },0_0,0_-,l_{h},\tilde{l}_{h}\}
	   \cup \{ r_+,r_0,r_-, \tilde{r}_+,\tilde{r}_0, \tilde{r}_-, \hat{r}_- 
	   	\mid 1\leq r\leq m\},\\
V 	&=\{a_r,{a_r}^{-},{a_r}^{+}\mid 1\leq r\leq m \}
	     \cup \{p,p^+,p^-\mid p\in B\},\\
T 	&=\{a_r\mid 1\leq r\leq k \}.
\end{align*}

The evaluation $\pi_V$ for the symbols in $V$ corresponds to the superscript 
of the symbol, i.e., for $\alpha^z\in V$ with $z\in \{+,0,-\}$ we define 
$\pi_V (\alpha^0)=0$ (we usually omit the superscript $0$),  
$\pi_V (\alpha^+)=+1$, and $\pi_V (\alpha^-)=-1$.

The connection structure, i.e., the undirected graph $G$, as well as the 
polarizations of the cells given by $\pi _L$ can directly be derived from 
the graph depicted in Figure 2. The rules from $R$ are grouped in five
different groups; $R$ is the union of all the sets $R_u$, $u\in L$ as 
defined below:

\begin{description}
\item[root cell $0$ ] All simulations start from cell $0$ and again end there.
\newline
$R_{0}=\{ p\to p\mid p:\left( ADD\left( r\right) ,q,s\right) \in P\}
	\cup \{ p\to p^+,p\to p^-\mid p:\left( SUB\left( r\right) ,q,s\right) \in P\}
	\cup \{ l_{h}\to l_{h}\}$
	
\item[increment group ] Any ADD-instruction 
$p:\left( ADD\left( r\right) ,q,s\right)$ is simulated by passing from cell $0$ 
to $0^{\prime}$, from where only the correct path through $r_+$ and then
$\tilde{r}_+$ for the suitable $r$ will lead back to cell $0$.\newline
$R_{0^{\prime}}=\{ p\to p^+\mid p:\left( ADD\left( r\right) ,q,s\right) \in P\}$

$R_{r^{+}}=\{ \lambda \to a_r\}$ In order to guarantee that the rule 
$\lambda \to a_r$ is applied only once, we need the condition that 
after the application of a rule the vesicle has to leave the cell, which 
here means to pass to cell $\tilde{r}^{+}$ where the polarization is 
changed so that the vesicle will not be able to immediately return 
to cell $r^{+}$.

$R_{\tilde{r}^{+}}=\{ p^+\to q,p^+\to s
	\mid p:\left( ADD\left( r\right) ,q,s\right) \in P\}$

We observe that no vesicle with a $p^+$ can go from cell $0$ to
cell $\tilde{r}^{+}$ without the vesicle then immediately being caught there 
in cells $\tilde{r}^{+}$ and $r^{+}$, as the $p^+$ from cell $0$ is for 
a SUB-instruction and the rules in $\tilde{r}^{+}$ are for labels of 
ADD-instructions.

\item[zero check ] $R_{r_0}=\{a_r\to {a_r}^+ \}$.
Cell $0$ sends the vesicle to $r_0$ by non-deterministically applying 
the rule $p\to p^-$ and thus setting the polarization of the multiset to $-1$.
If the rule $a_r\to {a_r}^+$ is applicable, then the polarization goes back 
to $0$ and therefore the correct continuation in cell $\tilde{r}_0$ is 
blocked. On the other hand, when the vesicle returns back to cell $0$,
no rule can be applied there, and then moving to cell $0^{\prime}$ or 
cell $l_h$ also does not yield a successful continuation.

$R_{\tilde{r}_0}=\{ p^-\to s
	\mid p:\left( SUB\left( r\right) ,q,s\right) \in P\}$

Cell $0_0$ is needed for blocking the way from cell $0$ to cell $\tilde{r}_0$.

The rule set $R_{0_0}$ can be taken to be empty. If for formal reasons
one would not like to have such a situation where a vesicle can pass through 
a cell without undergoing an evolution rule, we could take:

$R_{0_0}= \{ s\to s\mid s\in B\}$

\item[decrement ] Passing the sequence of cells 
$0$--$r_-$--$\tilde{r}_-$--$\bar{r}_-$--$0_-$ allows for decrementing the 
number of symbols $a_r$. Cell $0$ sends the vesicle to $r_-$ by 
non-deterministically applying the rule $p\to p^+$ and by setting the 
polarization of the multiset to $+1$.

$R_{r_-}=\{ a_r\to {a_r}^-\}$ After the application of the rule $a_r\to {a_r}^-$ 
the polarization is again $0$, so the vesicle might also go back to 
cell $0$, but all possible continuations from there finally get blocked with 
the $p^+$ in there for a label $p$ of a SUB-instruction when moving into the 
increment group.

$R_{\tilde{r}_-}=\{ p^+\to q \}$ With the application of the rule $p^+\to q $ 
the polarization changes; if the wrong $r$-branch has been chosen from 
cell $0$, the computation gets stuck here.

$R_{\bar{r}_-}=\{ {a_r}^-\to \lambda \}$

As in the zero-check group, the set $R_{0_0}$ can be chosen to be 
empty or we take:

$R_{0_0}= \{ s\to s\mid s\in B\}$


\item[halting group ] As soon as $M$ has reached the $HALT$-label $l_h$,
we may pass to cell $l_h$ containing the rule $l_h \to \lambda $; 
the resulting vesicle then can go to the output cell $\tilde{l}_h$ to 
yield the result of the computation.
\end{description}

In the way described above $\Pi $ can simulate the computations of $M$.
If the vesicle reaches the output cell $\tilde{l}_h$, only terminal symbols 
from $\{a_r\mid 1\leq r\leq k \}$ are contained in its multiset which 
represents the $k$-dimensional vector computed by $M$ by the number 
of symbols $a_r$ for the number contained in register~$r$. 

\hfill
\end{proof}


\section{Conclusion and future research}

In this paper, we have investigated tissue P systems operating on vesicles of 
multisets with point mutations, i.e., with insertion, deletion, and substitution of 
single symbols, working either in the maximally parallel set derivation mode or 
in the sequential derivation mode. Without any additional control features, 
when using the sequential derivation mode, we obtain a characterization of the 
sets of (vectors of) natural numbers generated by partially blind register machines, 
whereas when using all three operations insertion, deletion, and substitution on 
the vesicles of multisets we can generate every recursively enumerable set of 
(vectors of) natural numbers. If we add the feature of elementary polarizations 
$-1,0,1$ to the multisets and to the cells of the tissue P systems, even sequential 
tissue P systems are computationally complete.
\medskip

Besides the maximally parallel set derivation mode, also the other set 
derivation modes (see~\cite{Alhazovetal2016set}) promise to yield similar 
results. Another topic is to investigate the influence of the underlying 
communication structure on the generative power, especially in the case of 
polarized tissue P systems. Moreover, complexity issues like the number 
of cells remain to be investigated in the future, for example, also with respect 
to find small universal devices, e.g., see~\cite{Alhazovetal2016set}. 
We may also consider tissue P systems with more than one vesicle 
moving around, which, for example, offers the possibility to require the whole 
system to halt in order to obtain a result. Finally, using different evaluation 
functions may have an influence on the descriptional complexity of 
polarized tissue P systems.


\bibliographystyle{eptcs}
\bibliography{AFL2017AFIV-final}


\end{document}